\titlespacing*{\section}{0pt}{1.5ex plus 1ex minus 0.2ex}{0.5ex plus 0.2ex}
\titlespacing*{\subsection}{0pt}{1.5ex plus 1ex minus 0.2ex}{0.5ex plus 0.2ex}
\newtheoremstyle{mytheorem}
{3pt}                
{3pt}                
{}        
{}                
{\bfseries}       
{.}               
{ }               
{}                
\theoremstyle{mytheorem}
\newtheorem{thm}{Theorem}
\newtheorem{prop}{Proposition}
\newtheorem{lem}{Lemma}
\theoremstyle{definition}
\newtheorem{define}{Definition}
\newtheorem{assume}{Assumption}
\theoremstyle{remark}
\newtheorem{rem}{Remark}
\def\ba{\begin{array}}
\def\ea{\end{array}}
\def\bb{\mathbb}
\def\mc{\mathcal}
\def\T{\intercal}
\def\D{\text{D}}
\def\ol{\overline}
\def\ul{\underline}
\DeclareMathOperator*{\argmin}{arg\,min}
\newcommand\eqdef{\coloneqq}
\title{Eco-driving Incentive Mechanisms for Mitigating Emissions in Urban Transportation}
\author{
M.~Umar~B.~Niazi,~\IEEEmembership{Member,~IEEE,}
Jung-Hoon~Cho,~\IEEEmembership{Graduate Student Member,~IEEE,}
Munther~A.~Dahleh,~\IEEEmembership{Fellow,~IEEE,}
Roy~Dong,~\IEEEmembership{Member,~IEEE,}
and~Cathy~Wu,~\IEEEmembership{Member,~IEEE,}
\thanks{M. U. B. Niazi is with the Laboratory for Information and Decision Systems, Department of Electrical Engineering and Computer Science, Massachusetts Institute of Technology, Cambridge, MA 02139, USA, and with the Division of Decision and Control Systems, Digital Futures, KTH Royal Institute of Technology, SE-100 44 Stockholm, Sweden. Emails: \texttt{niazi@mit.edu}, \texttt{mubniazi@kth.se}}
\thanks{J.-H. Cho and C. Wu are with the Laboratory for Information and Decision Systems, Department of Civil and Environmental Engineering, Massachusetts Institute of Technology, Cambridge, MA 02139, USA. Email: \texttt{jhooncho@mit.edu}, \texttt{cathywu@mit.edu}}
\thanks{M. A. Dahleh is with the Laboratory for Information and Decision Systems, Department of Electrical Engineering and Computer Science, Massachusetts Institute of Technology, Cambridge, MA 02139, USA. Email: \texttt{dahleh@mit.edu}}
\thanks{R. Dong is with the Department of Industrial and Enterprise Systems Engineering, University of Illinois at Urbana-Champaign, Urbana, IL 61801, USA. Email: \texttt{roydong@illinois.edu}}
\thanks{This work was partially supported by the European Union's Horizon Research and Innovation Programme under Marie Sk\l{}odowska-Curie grant agreement No. 101062523, the National Science Foundation (NSF) under grant number 2149548, and the Kwanjeong scholarship.}
}
\begin{document}

\maketitle

\begin{abstract}
    This paper develops incentive mechanisms for promoting eco-driving with the overarching goal of minimizing emissions in transportation networks. 
    The system operator provides drivers with energy-efficient driving guidance throughout their trips and measures compliance through vehicle telematics that capture how closely drivers follow this guidance. 
    Drivers optimize their behaviors based on personal trade-offs between travel times and emissions. To design effective incentives, the operator elicits driver preferences regarding trip urgency and willingness to eco-drive, while determining optimal budget allocations and eco-driving recommendations.
    Two distinct settings based on driver behavior are analyzed. 
    When drivers report their preferences truthfully, an incentive mechanism ensuring obedience (drivers find it optimal to follow recommendations) is designed by implementing eco-driving recommendations as a Nash equilibrium. 
    When drivers may report strategically, the mechanism is extended to be both obedient and truthful (drivers find it optimal to report truthfully). 
    Unlike existing works that focus on congestion or routing decisions in transportation networks, our framework explicitly targets emissions reduction by incentivizing drivers. 
    The proposed mechanism addresses both strategic behavior and network effects arising from driver interactions, without requiring the operator to reveal system parameters to the drivers. 
    Numerical simulations demonstrate the effects of budget constraints, driver types, and strategic misreporting on equilibrium outcomes and emissions reduction.
\end{abstract}

\begin{IEEEkeywords}
    Incentive design, eco-driving, urban transportation, Nash equilibrium, obedience, truthfulness.
\end{IEEEkeywords}

\section{Introduction}

The transportation sector is a major contributor to climate change, accounting for a significant portion — between a quarter and a third \cite{epa-ghg2025} — of global greenhouse gas emissions. 
Urban transportation, in particular, poses a significant and growing threat, substantially contributing to these emissions with projections indicating an increase of 16-50\% by 2050 \cite{ipcc-2022}, even with improvements in vehicle technology and fuel economy.
This necessitates immediate action regarding financial investments and the adoption of readily available solutions for sustainable urban transportation \cite{liotta2023}. 

Numerous strategies have been employed to improve fuel efficiency and decrease emissions in on-road vehicles. 
While advancements in engine technology, electrification, and autonomous vehicles hold promise for sustainable transportation, these solutions require significant time and investment. 
Presently, even electric vehicles rely on an energy grid and raw materials that may not be entirely emissions-free and/or sustainable. Under these circumstances, eco-driving stands out as an immediate, cost-effective, and highly efficient means of reducing emissions from urban transportation \cite{huang2018}.

Eco-driving comprises techniques that optimize vehicle operation and driver behavior to improve fuel/energy efficiency, which is directly proportional to emissions \cite{jayawardana2022}. 
These techniques include smooth acceleration and deceleration, avoiding unnecessary idling, and maintaining steady speeds.
Eco-driving can substantially reduce emissions of internal combustion engine (ICE) vehicles, ranging from 10\% to as high as 45\% \cite{sivak2012}. 
For electric vehicles (EVs), eco-driving translates to energy-efficient driving \cite{zhang2015}. 
By reducing the overall energy consumption of EVs through eco-driving, the demand on the power grid decreases, which can lead to a reduction in indirect emissions depending on the grid's emissions factor \cite{holdway2010}.

While eco-driving offers significant environmental benefits, it can sometimes lead to longer travel times \cite{niazi2024, cho2024}.
This can be a barrier for some drivers to adopt eco-driving because of prioritizing shorter travel times over lower emissions. 
Therefore, transportation system operators (TSOs) need to consider implementing incentive mechanisms that reward drivers for adopting eco-driving practices. 
These eco-driving incentive mechanisms (EDIMs) not only encourage drivers to choose eco-friendly driving styles but also contribute to achieving the overall emission reduction goals in transportation networks.

A considerable amount of evidence supports the effectiveness of incentive mechanisms in promoting eco-driving. 
For instance, \cite{lai2015} observed a reduction of over 10\% in fuel consumption and emissions after monetary incentives were introduced to bus drivers for eco-driving.
Comparable results were obtained by \cite{liimatainen2011} and \cite{schall2017}, where logistic companies incentivized heavy-duty vehicle drivers. 
Behavioral studies, such as \cite{mcconky2018} and \cite{vaezipour2019}, demonstrate that incentives are more effective in changing driver behavior than merely providing informational indicators for eco-driving through in-vehicle interfaces.
However, the incentive mechanisms presented in this body of literature are overly simplistic and do not cater to various driver types with differing preferences.

To address the variability in driver preferences, our previous work \cite{niazi2024} developed a traffic simulation-assisted EDIM to minimize emissions in a transportation network. 
After gathering private information from the drivers about their trips and preferences, the TSO computes feasible eco-driving strategies for each driver that minimize the overall emissions of the network subject to budget and obedience constraints. 
Such an EDIM allows the TSO to provide personalized incentives and eco-driving recommendations to the drivers. 
However, \cite{niazi2024} does not account for the interactions between the drivers when they eco-drive in a transportation network. 
In reality, the driving policy of each driver impacts the traffic around her\footnote{We will refer to a driver by she/her and the TSO by he/him.}, influencing the driving policies of other drivers, and vice versa. 
Such an interaction is natural because the drivers usually aim to conform with the surrounding traffic \cite{schwarting2019, ozkan2021, jayawardana2023}. 
Moreover, \cite{niazi2024} assumes that drivers provide accurate information about their preferences, which may not hold when they are strategic.

In the present paper, we design EDIMs for promoting eco-driving by considering two key challenges: i)~interactions between the drivers who share routes, and ii)~drivers strategically reporting their preferences to the TSO to maximize their incentives.
The incentive mechanism induces an eco-driving game where drivers choose eco-driving levels that minimize a combination of their travel times and emissions, and maximize their respective incentives.
Assuming that drivers are truthful or the TSO knows their preferences/types, we design the so-called \textit{first-best} EDIM that minimizes the overall emissions by incentivizing drivers under budget constraints. 
The first-best EDIM implements the recommended eco-driving profile as a Nash equilibrium, which is shown to ensure obedience, i.e., the drivers find it optimal to adhere to the recommended eco-driving levels provided by the TSO. 
When drivers may strategically report their types, we design the so-called \textit{second-best} EDIM that, in addition to obedience, also ensures truthfulness, i.e., the drivers find it optimal to report their types truthfully.

The paper is organized as follows.
\cref{sec:related-work} reviews the related literature.
\cref{sec:model} describes our model of an eco-driving incentive mechanism. \cref{sec:mech-known-types} and \cref{sec:mech-unknown-types} propose our first-best and second-best EDIMs. \cref{sec:simulation} presents numerical simulations showing the effect of misreporting and the total incentive budget on eco-driving and emissions. Finally, \cref{sec:conclusion} presents the concluding remarks.

\textit{Notations.} 
Non-negative and positive real numbers are denoted as $\bb R_+$ and $\bb R_+^*$. 
Let $v=(v_1,\dots,v_n)\eqdef (v_i,v_{-i})$, where $v_i$ is the $i$-th element of $v$ and $v_{-i}\eqdef(v_1,\dots,v_{i-1},v_{i+1},\dots,v_n)$. 
For $\mc I\subseteq\{1,\dots,n\}$, $v_{\mc I}=(v_j)_{j\in\mc I}$. The interior of a closed set $S\subset\bb R^n$ is denoted by $\text{int}(S)$. 
Given sets $S_1,\dots,S_n$, denote $S_{-i} \eqdef \prod_{j\neq i} S_j$. 
Given a differentiable function $f:\bb R^n\to \bb R$, we denote its partial derivative with respect to $v_i$ as $\D_{v_i} f(v)$ and its gradient $\nabla_v f(v) = (\D_{v_1} f(v),\dots,\D_{v_n} f(v))$, where $v\in\bb R^n$. Finally, for $a\in\bb R$, we denote $|a|_+ \eqdef \max(0,a)$.

\section{Related Work} \label{sec:related-work}
The literature on incentive mechanisms for transportation and other infrastructure systems presents various approaches to influence agent behavior through different information structures and design constraints.
These approaches fundamentally differ in two key aspects: strategic versus non-strategic agents (e.g., drivers) and the directionality of information flow between system operators and agents.

In transportation, a significant body of work examines settings where TSOs use state information as a mechanism to influence traffic patterns. 
In this regard, \cite{wu2019} studies optimal information design from a Bayesian persuasion perspective, where operators send noisy signals about uncertain network states to regulate traffic flows. 
This framework is extended by \cite{zhu2022}, which develops computational approaches for both public and private signaling policies in nonatomic routing games. 
Similarly, \cite{massicot2021} establishes hierarchies between different information provision strategies, comparing public signaling with private recommendations. 
Along similar lines, \cite{ferguson2024} explores how information signaling, when combined with monetary incentives, affects system performance in Bayesian congestion games. 
These works share a common feature: the system operator leverages statistical knowledge about driver populations to optimize traffic outcomes through information disclosure, without requiring drivers to report their private preferences.

In contrast, mechanisms involving bidirectional information exchange introduce additional design challenges. 
For instance, \cite{bonatti2023, bonatti2024} examine the joint problem of information and mechanism design when selling information to competing agents with private types, establishing structural properties of optimal mechanisms while ensuring both obedience and truthfulness. 
Building on similar principles, \cite{satchidanandan2022-2, satchidanandan2023} develop two-stage mechanisms, where agents strategically report probabilistic descriptions of their future parameters in the first stage and realized values in the second stage.
Extension of this multi-stage framework to indirect mechanisms and non-myopic agents \cite{dahleh2024} demonstrates that inducing truthfulness imposes strict structural constraints on incentive functions.
These multi-stage mechanisms focus on ensuring truthful behavior from the agents without requiring them to comply with recommended actions.
Similarly, VCG-type mechanisms, e.g., \cite{satchidanandan2022}, focus purely on preference elicitation without providing action recommendations. 

Our work occupies a unique position in this domain. 
Unlike the information design literature, where TSOs share traffic state information with non-strategic drivers, we focus on eliciting private preferences from strategic drivers without revealing the system state. 
The system operator provides eco-driving recommendations and incentives to the drivers after gathering their private data. 
Our mechanisms are obedient and, in the strategic setting, truthful. 
This creates a joint design problem where recommendation and incentive functions must be optimized under dual constraints, while explicitly targeting emissions reduction rather than traditional objectives like flow maximization. 
This combination of strategic behavior, limited information sharing, and emissions-focused objectives distinguishes our work from existing approaches.

\section{Model of Eco-Driving Incentive Mechanism}
\label{sec:model}

This section explores the decision-making process that drivers utilize for selecting their eco-driving policies and the incentive mechanism model. 

\subsection{Eco-driving Policy and Interactions} 
Consider a set of drivers $\mc N=\{1,\dots,n\}$ in a transportation network $\mc G=(\mc I, \mc L)$, where $\mc I$ is the set of intersections and $\mc L\subseteq \mc I\times \mc I$ is the set of links/roads. Each driver~$i$ chooses a route $\mc R_i=\{o_i,\dots,d_i\}$ to drive between her origin $o_i\in \mc L$ and her destination $d_i\in \mc  L$, where $d_i\neq o_i$. 
Driver~$j$ influences $i$ if (i)~they share links on their routes, i.e., $\mc R_i\cap \mc R_j \neq \emptyset$ and there is a link $l\in\mc R_i\cap \mc R_j$ that $i$ and $j$ occupy at the same time, and (ii)~$j$ is positioned ahead of $i$ on link~$l$.

The driving policy adopted by driver~$i$, for $i\in\mc N$, on her route~$\mc R_i$ is mapped to an eco-driving level~$a_i\in A_i = [0,1]$. 
Knowing the vehicle types, origins, destinations, and routes of drivers, the TSO uses a digital platform to provide drivers with personalized eco-driving guidance that results in minimal emissions on their journeys. If driver~$i$ fully complies with this eco-driving guidance, then we say that her eco-driving level is $a_i=1$. Deviations from this eco-driving guidance result in $a_i<1$. A driver may choose $a_i<1$ to, for instance, reduce her travel time. 
Thus, drivers adopt driving policies to find optimal trade-offs between their emissions and travel time.

\begin{rem}
    A wealth of research exists on designing controllers for fuel-efficient or eco-friendly driving \cite{kamal2010, padilla2018, han2019}. These studies span signalized intersections \cite{mintsis2020, sun2020, jayawardana2022}, extend to connected vehicles \cite{turri2016, homchaudhuri2016}, and even cover urban transportation networks \cite{de-nunzio2016}, all considering varying traffic conditions. This research primarily focuses on eco-driving for autonomous vehicles, which can perfectly comply with the eco-driving policies computed by onboard controllers. However, human drivers may not comply with the eco-driving guidance because of the misalignment of their objectives with the TSO. The aim of the TSO is to minimize the overall emissions of the network. In addition to minimizing emissions or fuel consumption, drivers' objectives may also include minimizing their travel time. This misalignment of objectives motivates the need for incentives that encourage drivers to comply with the eco-driving guidance.
    \hfill $\diamond$
\end{rem}

\begin{figure}[!]
    \centering
    \begin{tikzpicture}
        \node at (0,0) {\includegraphics[width=0.95\linewidth]{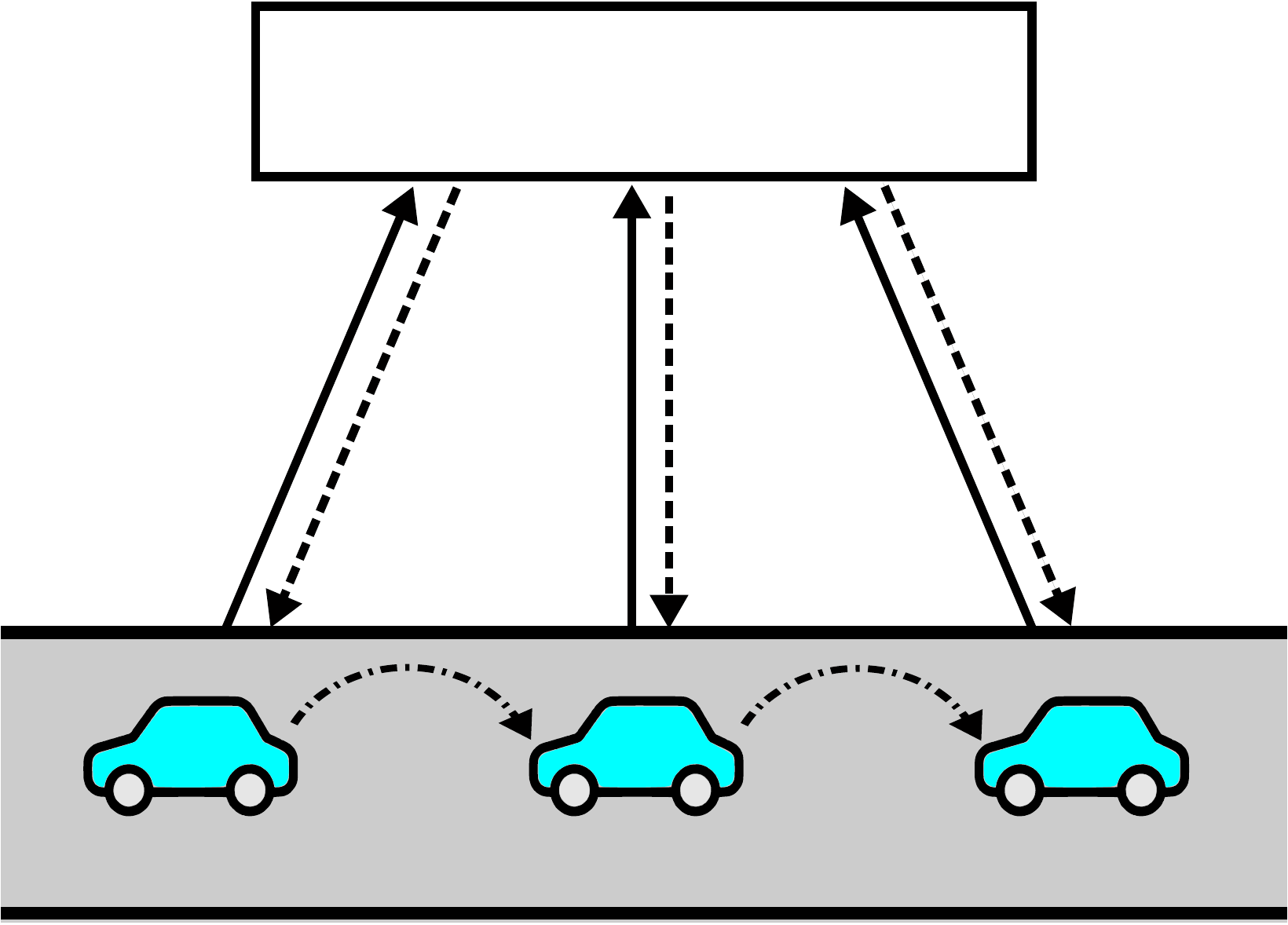}};
        \node[align=center] at (0,2.3) {Transportation System Operator \\ (TSO)};
        \node[rotate=67] at (-2.5,0.3) {\small Reported type $\hat\theta_{i+1}$};
        \node[rotate=-67, text width=3cm, align=center] at (3,0.5) {\small Recommended eco-driving $f_{i-1}$ and incentive $u_{i-1}$};
        \node at (0,-1.85) {\small $i$};
        \node at (-2.9,-1.85) {\small $i+1$};
        \node at (2.9,-1.85) {\small $i-1$};

        \node at (-2.9,-2.6) {\small Eco-driving $a_{i+1}$};
        \node at (0,-2.6) {\small $a_{i}(a_{i+1})$};
        \node at (2.9,-2.6) {\small $a_{i-1}(a_i)$};
    \end{tikzpicture}
    \caption{Interactions between the TSO and the drivers in an EDIM, and among the drivers when they eco-drive.}
    \label{fig:mech}
\end{figure}

\subsection{Emissions and Travel Time}

For our theoretical analysis of EDIMs, we conceptualize a monitoring phase comprising multiple trips of the drivers in the past where the TSO used vehicle telematics to collect data related to the driving behavior of each driver~$i\in\mc N$.
This abstraction allows us to develop a mathematical framework where the TSO already has estimates of interactions between drivers based on their daily routes and how different eco-driving policies affect emissions and travel times.
In principle, such estimation is feasible in urban settings with periodic traffic patterns during weekdays, where drivers follow regular routes between homes and workplaces during predictable hours.
While perfect estimation would be challenging in practice, our theoretical framework assumes that the TSO has sufficient historical data to estimate the functions of emissions $x_i: A\to X_i$ and travel time $y_i: A\to Y_i$ for each driver~$i\in \mc N$, where $A=\prod_{j=1}^n A_j$ and $X_i,Y_i\subset \bb R_+^*$.

\begin{rem}
    Microscopic models like CMEM \cite{scora2006} or VT-CPFM \cite{park2013} provide established frameworks for mapping driving behavior and traffic conditions to emissions and travel time. 
    These could conceptually be adapted to our eco-driving framework by parameterizing driving behavior as eco-driving level $a_i$ and to some degree representing traffic conditions through $a_{-i}$.
    However, these models require detailed inputs that may be challenging to obtain in practice, and their adaptation would introduce estimation uncertainties that would need to be quantified. 
    Nonetheless, advances in telematics technology can provide the data required for accurate estimation of these functions.
    For instance, \cite{singh2021} develops a deep learning method using telematics data to precisely estimate vehicle emissions within a transportation network. Similarly, \cite{allstrom2012} focuses on macroscopic travel time predictions, whereas \cite{li2017} uses GPS data of vehicles to predict microscopic travel times with high accuracy.
    In the absence of traditional telematics, smartphone-based telematics \cite{wahlstrom2017} offers a viable solution for estimating both emissions and travel times.
    The reader is referred to \cite{gao2022} for a survey on learning from vehicle telematics data.
    \hfill $\diamond$
\end{rem}

Rather than focusing on the specific methods for obtaining emissions and travel time functions, our primary contribution lies in analyzing the structural properties of eco-driving incentive mechanisms while assuming these functions are known.
With this context in place, we make the following assumptions on the functional class of emissions and travel time.

\begin{assume}
    \label{assume:convex-diff}
    For every $i\in\mc N$ and $a_{-i}\in A_{-i}$, the emissions 
    $x_i:A \to X_i$
    and the travel time 
    $y_i:A \to Y_i$
    are convex and differentiable functions with respect to their $i$-th argument $a_i\in A_i$. 
\end{assume}

From a practical perspective, the convexity of $x_i$ reasonably approximates the diminishing returns in emission reduction (see \cref{assume:emission-decreasing}) as eco-driving levels increase, i.e., each incremental improvement in eco-driving behavior tends to yield progressively smaller emission benefits. 
On the other hand, the convexity of $y_i$ means that as the eco-driving level $a_i$ increases from 0 to 1, the incremental travel time penalties become progressively larger.

Let $\mc N_i$ denote the set of drivers that influence $i$.
This set can be determined from the past vehicle trajectory data of drivers.
Then, for every $i\in\mc N$, the emissions $x_i$ and the travel time $y_i$ are functions of the driving policy of $i$ and the driving policies of $\mc N_i$. In reality, the emissions $x_i(a_i,a_{-i}) \eqdef x_i(a_i,a_{\mc N_i})$ and the travel time $y_i(a_i,a_{-i}) \eqdef y_i(a_i,a_{\mc N_i})$, where $a_{\mc N_i}=(a_j)_{j\in \mc N_i}$. However, for simplicity and to avoid clutter, we keep the notation $x_i(a_i,a_{-i})$ and $y_i(a_i,a_{-i})$, which is, for good reasons, prevalent in game theory literature.

\begin{assume}
    \label{assume:emission-decreasing}
    For every $a_{-i}\in A_{-i}$, $x_i(a_i,a_{-i})$ is non-increasing with respect to $a_i\in A_i$.
\end{assume}

\cref{assume:emission-decreasing} considers full compliance with the eco-driving guidance provided by the TSO, i.e., eco-driving level $a_i=1$, as the most energy-efficient policy that results in minimal emissions. That is, given $a_{-i}\in A_{-i}$, $x_i(1,a_{-i})\leq x_i(a_i,a_{-i})$ for every $a_i\in A_i=[0,1]$. In fact, for every $a_i, \hat a_i\in A_i$ with $\hat a_i \geq a_i$, we have $x_i(\hat a_i,a_{-i})\leq x_i(a_i, a_{-i})$. Moreover, due to \cref{assume:convex-diff}, we have $\D_{a_i} x_i(\hat a_i,a_{-i}) \geq \D_{a_i} x_i(a_i,a_{-i})$ for $\hat a_i\geq a_i$.

\subsection{Decision Model of Drivers}
\label{subsec:decision-model}
In the absence of incentives under nominal conditions, we assume that each driver~$i$ chooses an eco-driving level that minimizes her nominal cost $c_i:A\times \Theta_i\to \bb R$ given by
\begin{equation}
    \label{eq:cost-i-nom}
    c_i(a_i,a_{-i},\theta_i) = \theta_i x_i(a_i,a_{-i}) + (1-\theta_i) y_i(a_i,a_{-i})
\end{equation}
where $\theta_i\in \Theta_i=[0,1]$ is the \textit{type} of driver~$i$ that determines the relative importance she places on her emissions over her travel time. Let $\Theta=\prod_{j=1}^n \Theta_j$ be the type space of all drivers. 

Our decision model \eqref{eq:cost-i-nom} for drivers is similar to the multi-objective optimization frameworks proposed by \cite{yin2006} and \cite{niazi2024}. 
In these frameworks, optimal solutions form a Pareto front where neither emissions nor travel time can be improved without sacrificing the other objective. 
The driver's type parameter $\theta_i$ precisely determines her position on this Pareto front, where $\theta_i$ closer to 1 indicates a stronger preference for emissions reduction over travel time, while $\theta_i$ closer to 0 prioritizes minimizing travel time. 
However, since the TSO doesn't know the types $\theta_i$, $i\in\mc N$, she aims to design a mechanism to elicit this information from the drivers.

\begin{rem}
    Since drivers can easily observe the time to destination through navigation apps like Google Maps, it is natural to assume that they choose a driving policy that only optimizes their travel time. 
    However, similar to \cite{yin2006}, the decision model of drivers considered in \eqref{eq:cost-i-nom} assumes that drivers optimize a combination of both emissions and travel time unless $\theta_i=1$ or $\theta_i=0$.
    This model is valid because many newer vehicles offer a fuel economy estimation feature on the dashboard that displays the number of kilometers/miles per unit of fuel (liter/gallon for internal combustion engine vehicles and kWh for electric vehicles) based on the style with which the vehicle is being driven. 
    The fuel economy feature provides valuable feedback to the driver — essentially showing how far the car could travel on a single unit of fuel if driven in the same manner as currently. 
    Using this feature, drivers become aware of their fuel consumption and adopt driving policies that increase their fuel economy or, equivalently, decrease their emissions \cite{boriboonsomsin2010}.
    \hfill $\diamond$
\end{rem}

The task of the TSO is to design an incentive rate function $u:\Theta\to U$, where $u_i(\theta)\in U_i=\bb R_+$ and $U=\prod_{j=1}^n U_j$, for persuading the drivers to increase their eco-driving levels. 
This incentive transforms the nominal cost of driver~$i$ to an incentivized cost $\ell_i:A\times\Theta_i\times U_i \to \bb R$, which is given by
\begin{equation}
    \label{eq:cost-i-inc}
    \ell_i(a_i,a_{-i},\theta_i, u_i(\theta)) =  c_i(a_i,a_{-i},\theta_i) - u_i(\theta) a_i
\end{equation}
where $c_i$ is given in \eqref{eq:cost-i-nom} and $u_i(\theta) a_i$ is the total incentive received by driver~$i$ for choosing an eco-driving level $a_i$.

We consider traffic conditions in the network where eco-driving leads to extended travel times. 
For instance, maintaining a smooth, consistent speed can reduce emissions but may take longer than aggressive driving with frequent accelerations \cite{coloma2017}. 
To incentivize eco-friendly driving, the TSO compensates drivers for any additional travel time incurred. 
It is important to note that there could be other scenarios, such as eco-driving near signalized intersections \cite{mintsis2020, sun2020, jayawardana2022}, where eco-driving reduces both travel time and emissions. 
However, in these scenarios, drivers do not need incentives to eco-drive according to the emissions-travel time trade-off model considered in \eqref{eq:cost-i-nom} and \eqref{eq:cost-i-inc}.

\subsection{Eco-driving Game induced by EDIM}

The \textit{eco-driving incentive mechanism} (EDIM) is denoted by $(f,u)$, where $f:\Theta\to A$ is the eco-driving recommendation function that the TSO wants the drivers to comply with, and $u:\Theta\to U$ is the incentive rate that the TSO provides to the drivers. The vector $f(\theta)=(f_1(\theta),\dots,f_n(\theta))\in A$ is interpreted as the minimum eco-driving levels recommended to the drivers and the vector $u(\theta)=(u_1(\theta),\dots,u_n(\theta))\in U$ as the incentive rates allocated to the drivers. Driver~$i$ is said to comply with the recommendation $f_i(\theta)$ when she chooses her eco-driving level $a_i\geq f_i(\theta)$, for which she receives the incentive amount equal to $u_i(\theta)a_i$.

\begin{rem}
    In the incentive design literature, $f$ is also known as the \textit{social choice function}. Moreover, when the true types of the drivers are known, the outcome $(f(\theta),u(\theta))$ implemented by the incentive mechanism is called the \textit{first-best solution}, which will be the focus of \cref{sec:mech-known-types}. When the true types are unknown, the implemented outcome is called the \textit{second-best solution}, which will be the focus of \cref{sec:mech-unknown-types}.
    \hfill $\diamond$
\end{rem}

By $(\ell_1,\dots,\ell_n)$, we denote a non-cooperative game, called the \textit{eco-driving game (EDG)}, between $n$ drivers induced by the EDIM $(f,u)$. In this game, given the incentive $u_i(\theta)$, each driver~$i$ optimizes her eco-driving level $a_i\in A_i$ on the route $\mc R_i$, respectively, depending on the eco-driving levels of influencing drivers $j\in\mc N_i$. 

\begin{define}
    The eco-driving profile $a^* = (a_1^*,\dots, a_n^*)$ is a \emph{Nash equilibrium} of the induced EDG $(\ell_1,\dots,\ell_n)$ if for a given $\theta\in\Theta$ and $u\in U$,
    \begin{equation}
        \label{eq:NE-def}
        \!\ell_i(a_i^*,a_{-i}^*, \theta_i, u_i) \!\leq \ell_i(a_i,a_{-i}^*, \theta_i, u_i), ~
        \forall i\in \mc N, \forall a_i\in A_i
    \end{equation}
    where $\ell_i$ is the incentivized cost given in \eqref{eq:cost-i-inc}.
\end{define}

That is, at a Nash equilibrium, no driver can benefit from unilaterally deviating from their equilibrium eco-driving level $a_i^*$. Because of \cref{assume:convex-diff}, the existence of a Nash equilibrium is guaranteed by Debreu~\cite{debreu1952}, 
as restated below.

\begin{lem}
    \label{lem:NE-exists}
    Let \cref{assume:convex-diff} hold. Then, the eco-driving game $(\ell_1,\dots,\ell_n)$ admits a Nash equilibrium.
\end{lem}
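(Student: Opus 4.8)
The plan is to obtain $a^*$ by invoking the classical existence result of Debreu \cite{debreu1952} (the Debreu--Glicksberg--Fan theorem), which guarantees a pure-strategy Nash equilibrium whenever, for each player $i$, (i) the action set $A_i$ is a nonempty, compact, convex subset of a Euclidean space, (ii) the cost $\ell_i$ is continuous on the joint action space $A$, and (iii) $\ell_i$ is quasi-convex in the own action $a_i$. I would verify these three hypotheses for the EDG $(\ell_1,\dots,\ell_n)$ and then read off the conclusion, since the resulting fixed point is exactly a profile satisfying \eqref{eq:NE-def}.

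Two of the hypotheses are immediate. For (i), each $A_i=[0,1]$ is nonempty, compact, and convex, hence so is $A=\prod_{j=1}^n A_j$. For (iii), fix $i\in\mc N$, $\theta_i\in\Theta_i$, $u_i\in U_i$, and $a_{-i}\in A_{-i}$; from \eqref{eq:cost-i-inc} and \eqref{eq:cost-i-nom},
\[
\ell_i(a_i,a_{-i},\theta_i,u_i) = \theta_i\, x_i(a_i,a_{-i}) + (1-\theta_i)\, y_i(a_i,a_{-i}) - u_i a_i .
\]
By \cref{assume:convex-diff}, both $x_i(\cdot,a_{-i})$ and $y_i(\cdot,a_{-i})$ are convex on $A_i$; since $\theta_i\in[0,1]$ their convex combination is convex, and subtracting the linear term $u_i a_i$ preserves convexity. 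Thus $\ell_i(\cdot,a_{-i},\theta_i,u_i)$ is convex, and in particular quasi-convex, in $a_i$, giving (iii).

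The delicate point, and what I expect to be the main obstacle, is hypothesis (ii): Debreu's theorem needs $\ell_i$ continuous in the \emph{full} profile $a\in A$, whereas \cref{assume:convex-diff} supplies differentiability (hence continuity) only in the own coordinate $a_i$. The cleanest resolution is to record joint continuity of $x_i$ and $y_i$ on the compact set $A$ as the standing regularity that accompanies their estimation; indeed, the conditional-expectation construction of these functions described after \cref{assume:convex-diff} yields jointly continuous maps, and in the running \cref{ex:em-tt-fun} the closed-form expressions \eqref{eq:em-fun}--\eqref{eq:tt-fun} are smooth in all arguments. Granting this, $\ell_i$ is a finite combination of continuous functions together with the continuous map $a_i\mapsto u_i a_i$, hence continuous on $A$. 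With (i)--(iii) in hand, Debreu's theorem produces an action profile $a^*\in A$ satisfying \eqref{eq:NE-def}, i.e., a Nash equilibrium of the EDG, which is the claim.
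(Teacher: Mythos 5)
Your proposal is correct and takes essentially the same route as the paper: the paper's own proof (a sketch in the appendix) likewise verifies that $\ell_i$ is convex and continuous in $a_i$, notes that $A_i=[0,1]$ is convex, and concludes by citing Debreu \cite{debreu1952}. In fact, you are more careful than the paper on one point: the paper's sketch states Debreu's theorem as requiring continuity of $\ell_i$ only in the own action $a_i$, whereas the classical Debreu--Glicksberg--Fan result needs continuity in the full profile $a\in A$ (plus compactness of the $A_i$) --- exactly the gap you identify and patch by positing joint continuity of $x_i$ and $y_i$, so this imprecision lies in the paper's appendix rather than in your argument.
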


Although \cref{lem:NE-exists} guarantees the existence of a Nash equilibrium, it may not be unique. Its uniqueness is guaranteed by the so-called diagonally strict convexity condition by Rosen~\cite{rosen1965}, which can be interpreted as the monotonicity of the vector of gradients of the cost functions $\ell_i$ with respect to $a_i$, for every $i\in\mc N$. This condition is satisfied if the emission $x_i$ and travel time $y_i$ are strictly convex functions.

Note that, in the absence of incentives, Nash equilibria typically do not yield system-level optimal outcomes. Mechanism design theory specifically addresses this gap under incomplete information. In our formulation of the incentive mechanism in \cref{sec:mech-known-types}, the TSO aims to achieve an optimal system-level outcome (i.e., minimum overall emissions) while inducing that outcome as a Nash equilibrium through incentives.

\subsection{Obedience}

We determine the proper incentive $u_i(\theta)$ that ensures that the driver complies with the recommendation $f_i(\theta)$ by choosing $a_i\geq f_i(\theta)$. That is, for every $i\in\mc N$, $u_i(\theta)\in U_i$ is such that it is in the best interest of driver~$i$ to comply with the recommended eco-driving level $f_i(\theta)\in A_i$. 

\begin{define}
    \label{def:IR}
    Given $\theta\in\Theta$, the EDIM $(f,u)$ is said to be \textit{obedient} if for every $i\in\mc N$,
    \begin{multline}
        \label{eq:IR-def}
        c_i(f(\theta),\theta_i) - u_i(\theta)f_i(\theta) \\ \leq c_i(a_i,f_{-i}(\theta),\theta_i) - u_i(\theta) a_i, \quad
        \forall a_i< f_i(\theta).
    \end{multline}
\end{define}

The above inequality \eqref{eq:IR-def} relates to the compliance of drivers with the EDIM's recommendation $f(\theta)$. In the eco-driving setting, the EDIM incentivizes drivers to select eco-driving levels at least as high as the recommended ones. That is, for every $\theta_i\in\Theta_i$, an obedient mechanism $(f,u)$ ensures that it is in the best interest of each driver~$i$ to comply with the recommended eco-driving level $f_i(\theta)\in A_i$, i.e., it is optimal for $i$ to choose her eco-driving level $a_i\geq f_i(\theta)$, given that all other drivers also comply minimally, i.e., $a_{-i}=f_{-i}(\theta)$. This notion of obedience is `ex-post' in nature, like Nash equilibrium, where one has that, given all other drivers choose the minimal recommended eco-driving level $a_{-i}=f_{-i}(\theta)$, it is in the best interest of driver~$i$ also to choose $a_i\geq f_i(\theta)$.

\begin{prop}
    \label{prop:IR-char}
    Given $\theta\in\Theta$, the EDIM $(f,u)$ is obedient if and only if, for every $i\in\mc N$,
    \begin{align}
        u_i(\theta) \!\geq \theta_i \xi_i(a_i,\! f(\theta)) \!+\! (1-\theta_i) \tau_i(a_i,\! f(\theta)), ~
        \forall a_i<\! f_i(\theta) 
        \label{eq:IR-char1}
    \end{align}
    where
    \begin{subequations}
        \label{eq:xi-tau-def}
        \begin{align}
            \xi_i(a_i,f(\theta)) &\eqdef \frac{x_i(f(\theta))-x_i(a_i,f_{-i}(\theta))}{f_i(\theta)-a_i} \\
            \tau_i(a_i,f(\theta)) &\eqdef \frac{y_i(f(\theta))-y_i(a_i,f_{-i}(\theta))}{f_i(\theta)-a_i}.
        \end{align}
    \end{subequations}
\end{prop}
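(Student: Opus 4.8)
The plan is to establish the biconditional through a single chain of \emph{reversible} algebraic manipulations applied to the obedience inequality of \cref{def:IR}. First I would fix an arbitrary driver $i\in\mc N$, an arbitrary type $\theta_i\in\Theta_i$, and an arbitrary deviation $a_i<f_i(\theta)$, and then rearrange \eqref{eq:IR-def} by moving the two incentive terms to the right-hand side. This rewrites the obedience condition equivalently as $c_i(f(\theta),\theta_i)-c_i(a_i,f_{-i}(\theta),\theta_i) \leq u_i(\theta)\,(f_i(\theta)-a_i)$, where I use the shorthand $c_i(f(\theta),\theta_i)=c_i(f_i(\theta),f_{-i}(\theta),\theta_i)$.

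Next I would substitute the explicit nominal cost from \eqref{eq:cost-i-nom} into the left-hand side. Since $c_i$ is the affine combination $\theta_i x_i+(1-\theta_i)y_i$, the difference splits linearly into $\theta_i\bigl[x_i(f(\theta))-x_i(a_i,f_{-i}(\theta))\bigr]+(1-\theta_i)\bigl[y_i(f(\theta))-y_i(a_i,f_{-i}(\theta))\bigr]$. The decisive step is then to divide both sides of the inequality by $f_i(\theta)-a_i$. Because the obedience condition only constrains downward deviations, $a_i<f_i(\theta)$, this quantity is strictly positive, so dividing preserves the direction of the inequality. After the division, the left-hand side matches exactly the right-hand side of \eqref{eq:IR-char1} by the definitions of $\xi_i$ and $\tau_i$ in \eqref{eq:xi-tau-def}, and the right-hand side collapses to $u_i(\theta)$, which is precisely the claimed characterization.

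Finally, I would note that each operation used—moving terms across the inequality, splitting the affine combination, and dividing by the strictly positive scalar $f_i(\theta)-a_i$—is reversible, so the argument proves both implications simultaneously; moreover, the universal quantifier ``$\forall a_i<f_i(\theta)$'' is carried through unchanged on both forms, so the equivalence holds pointwise in $a_i$ and hence as stated. I do not anticipate any genuine obstacle here: the only point demanding care is the sign of $f_i(\theta)-a_i$, which must be verified positive before dividing, and this is guaranteed precisely by the range $a_i<f_i(\theta)$ appearing in \cref{def:IR}. Notably, the argument never invokes \cref{assume:convex-diff} or \cref{assume:emission-decreasing}, so the characterization is purely structural and holds regardless of the convexity or monotonicity of $x_i$ and $y_i$.
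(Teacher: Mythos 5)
Your proposal is correct and follows essentially the same route as the paper's own proof: rearrange the obedience inequality of \cref{def:IR}, expand $c_i$ via \eqref{eq:cost-i-nom}, and divide by the strictly positive quantity $f_i(\theta)-a_i$, with every step reversible so that both directions of the equivalence follow at once. Your explicit remarks on reversibility and on the fact that neither \cref{assume:convex-diff} nor \cref{assume:emission-decreasing} is needed are left implicit in the paper but are accurate.
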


The proof is straightforward by rearranging \eqref{eq:IR-def} and dividing both sides by $f_i(\theta)-a_i>0$.
Since $x_i(a_i,a_{-i})$ is non-increasing in $a_i$ (\cref{assume:emission-decreasing}), we have that $\xi_i(a_i,f(\theta))\leq 0$ for every $a_i< f_i(\theta)$. Therefore, the characterization \eqref{eq:IR-char1} of obedience provides an appropriate amount of incentive for driver~$i$ to ensure her compliance. That is, the minimum amount of incentive that ensures compliance/obedience of driver~$i$ should be larger than the maximum of the weighted sum (according to $\theta_i$) of the rate at which $i$ can increase her emissions $\xi_i(a_i,f(\theta))$ and the rate at which $i$ can decrease her travel time $\tau_i(a_i,f(\theta))$ by not complying.

\section{First-best Eco-driving Incentive Mechanism} 
\label{sec:mech-known-types}

The first-best optimal EDIM is the mechanism that achieves the best outcome when the TSO has perfect knowledge of the drivers' types \cite{laffont2002}. 
Equivalently, one can assume the drivers truthfully report their types to the TSO. 
Then, the goal of the TSO is to minimize the overall emissions by recommending a minimum level of eco-driving $f_1(\theta)\in A_1,\dots,f_n(\theta)\in A_n$ and respectively allocating the incentives $u_1(\theta)\in U_1,\dots,u_n(\theta)\in U_n$ to the drivers. 
In this section, we show that the first-best EDIM is obedient and formulate it as a constrained optimization problem.

\subsection{Obedience under Nash Equilibrium}

We observe that the condition of obedience is naturally satisfied when the EDIM $(f,u)$ implements the recommendation function $f:\Theta\to A$ in Nash equilibrium, i.e., there exists a Nash equilibrium $a^*\in A$ of the induced EDG $(\ell_1,\dots,\ell_n)$ such that $f(\theta)=a^*$.

\begin{lem}
    \label{lem:NE-obedience}
    If the EDIM $(f,u)$ implements $f:\Theta\to A$ in Nash equilibrium, then it is obedient.
\end{lem}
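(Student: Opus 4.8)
The plan is to observe that the obedience inequality \eqref{eq:IR-def} is simply the Nash equilibrium condition \eqref{eq:NE-def} restricted to downward deviations, and therefore follows immediately from the hypothesis. First I would unpack what it means for $(f,u)$ to implement $f$ in Nash equilibrium: by definition there exists a Nash equilibrium $a^*\in A$ of the induced EDG $(\ell_1,\dots,\ell_n)$ with $f(\theta)=a^*$, so that $f_i(\theta)=a_i^*$ and $f_{-i}(\theta)=a_{-i}^*$ for every $i\in\mc N$.

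Next I would write out the defining inequality \eqref{eq:NE-def} of a Nash equilibrium for each driver $i$, evaluated at the realized incentive rate $u_i=u_i(\theta)$:
\begin{equation*}
    \ell_i(a_i^*,a_{-i}^*,\theta_i,u_i(\theta)) \leq \ell_i(a_i,a_{-i}^*,\theta_i,u_i(\theta)), \quad \forall a_i\in A_i.
\end{equation*}
Substituting $a^*=f(\theta)$ and expanding $\ell_i$ through its definition \eqref{eq:cost-i-inc} rewrites the left-hand side as $c_i(f(\theta),\theta_i)-u_i(\theta)f_i(\theta)$ and the right-hand side as $c_i(a_i,f_{-i}(\theta),\theta_i)-u_i(\theta)a_i$; the linear incentive term transfers through unchanged because $\ell_i$ and $c_i$ differ only by the additive $-u_i a_i$.

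Finally I would invoke the domain inclusion. The inequality just obtained holds for every $a_i\in A_i=[0,1]$, hence a fortiori for every $a_i<f_i(\theta)$, since $\{a_i\in A_i:a_i<f_i(\theta)\}\subseteq A_i$. This is exactly the obedience condition \eqref{eq:IR-def}, and since $i\in\mc N$ and $\theta_i\in\Theta_i$ were arbitrary, $(f,u)$ is obedient. There is essentially no hard step: the whole argument is the remark that obedience demands only resistance to downward deviations $a_i<f_i(\theta)$, while being a Nash equilibrium already guarantees optimality against the strictly larger set of all deviations in $A_i$. The only point requiring care is the bookkeeping match between the generic incentive argument $u_i$ in \eqref{eq:NE-def} and the realized incentive $u_i(\theta)$ in \eqref{eq:IR-def}, which is settled by evaluating the equilibrium condition at $u=u(\theta)$.
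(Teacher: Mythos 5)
Your proof is correct and follows exactly the paper's own argument: the Nash equilibrium condition holds over all of $A_i$, hence in particular over the subset $\{a_i \in A_i : a_i < f_i(\theta)\}$, which is precisely the obedience inequality \eqref{eq:IR-def}. The extra bookkeeping you do (expanding $\ell_i$ via \eqref{eq:cost-i-inc} and evaluating at $u = u(\theta)$) is implicit in the paper's shorter proof but adds nothing different in substance.
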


Notice that the implication does not hold in the other direction. 
That is, there could be obedient EDIMs $(f,u)$ that may not implement $f$ in a Nash equilibrium. 
However, in light of \cref{lem:NE-obedience}, it is reasonable to design EDIMs that implement the eco-driving recommendation function $f$ in a Nash equilibrium. 

When the types $\theta_1,\dots,\theta_n$ are known or reported truthfully, the TSO can implement the EDIM $(f,u)$ as follows:
\begin{subequations}
    \label{eq:mech}
    \begin{align}
        & \text{minimize}~\sum_{i=1}^n x_i(f)~
        \text{subject to} 
        \label{eq:mech-cost} \\
        & f \!\in\! \{a^*\!\!\in\! A: \forall i\in \mc N, a_i^*\in \argmin_{a_i\in A_i} \ell_i(a_i, a_{-i}^*, \theta_i, u_i)\} 
        \label{eq:mech-ne} \\
        & u \!\in\! \{\hat u \in U : \sum_{i=1}^n \hat u_i \leq b\}
        \label{eq:mech-bc}
    \end{align}
\end{subequations}
where $b\in \bb R_+$ is the total budget of the TSO. Notice that for a fixed $\theta\in\Theta$ and $u:\Theta\to U$, the right-hand side of \eqref{eq:mech-ne} denotes the set of Nash equilibria.

Even if the existence of a Nash equilibrium is guaranteed by \cref{lem:NE-exists}, computing a Nash equilibrium profile for a continuous game $(\ell_1,\dots,\ell_n)$ is a challenging problem. 
Ratliff et al.~\cite{ratliff2013} present a gradient-based algorithm that converges to a local Nash equilibrium, which is a weaker notion than the Nash equilibrium. 
Mertikopoulos and Staudigl in \cite{mertikopoulos2017, mertikopoulos2019} provide a multi-agent online learning algorithm that converges to a Nash equilibrium under a variational condition on the cost functions. 
Toonsi and Shamma \cite{toonsi2024} provide different conditions under which distributed gradient-based algorithms with higher-order dynamics can converge to a Nash equilibrium.
However, due to \cref{assume:convex-diff} and the specific structure of the incentive mechanism \eqref{eq:mech}, the TSO can induce any Nash equilibrium by allocating incentives $u_1(\theta),\dots,u_n(\theta)$ to drivers subject to the budget constraint \eqref{eq:mech-bc}. 
Therefore, in this case, the TSO does not need to compute a Nash equilibrium; rather, he chooses one and implements it.
In other words, any recommendation $f(\theta)$ can be implemented as a Nash equilibrium by appropriately incentivizing drivers subject to the budget constraint.

\begin{rem}
    While our framework focuses on emissions reduction, it is worth noting that energy consumption, which is directly proportional to emissions across vehicle types, can also be considered. 
    For conventional internal combustion engine (ICE) vehicles, emissions are directly proportional to fuel consumption with vehicle-specific emission factors \cite{liu2016}. 
    For electric vehicles, this relationship is mediated by grid emission factors \cite{kawamoto2019}. 
    Recent studies have validated that eco-driving strategies optimized for emissions reduction also yield proportional improvements in energy efficiency \cite{huang2018}, with a high correlation across diverse driving conditions \cite{scora2006}. 
    Our framework can accommodate either objective function with minimal modification, as the mathematical structure of the incentive mechanism remains unchanged whether optimizing for emissions or energy consumption. 
    However, to align with climate policy objectives and regulatory frameworks that explicitly target greenhouse gas emissions, we consider emission reduction as the overarching goal of the TSO.
    \hfill $\diamond$
\end{rem}

\subsection{Implementation in Nash Equilibrium}

As shown in \cref{lem:NE-obedience}, when the recommendation $f(\theta)$ is a Nash equilibrium of the induced EDG $(\ell_1,\dots,\ell_n)$, then the EDIM $(f,u)$ is obedient. In other words, it is in the best interest of each driver~$i$ to comply with the recommended eco-driving level by choosing $a_i\geq f_i(\theta)$ when other drivers choose $a_j= f_j(\theta)$ for all $j\in\mc N \setminus \{i\}$. Therefore, we explore the implementation of EDIM in the Nash equilibrium by appropriately choosing the incentives.

When the drivers truthfully report their types $\theta\in\Theta$, the mechanism $(f,u)$ given by \eqref{eq:mech} is well-posed and it implements the eco-driving recommendation function $f:\Theta\to A$ in Nash equilibrium.
To be precise, subject to \cref{assume:convex-diff}, \cref{lem:NE-exists} proves the existence of a Nash equilibrium of the induced EDG $(\ell_1,\dots,\ell_n)$. Fix $u:\Theta\to U$ such that \eqref{eq:mech-bc} is satisfied. Define $A^*\subseteq A$ as
\begin{equation}
    \label{eq:A-star}
    A^* \!\eqdef\! \{a^*\!\in\! A: \forall i\in\mc N, a_i^*\!\in \argmin_{a_i\in A_i} \ell_i(a_i,a_{-i}^*,\theta_i,u_i)\!\}.
\end{equation}
Then, for every $a^*\in A^*$, we have for every $i\in\mc N$,
\[
\ell_i(a_i^*,a_{-i}^*,\theta_i,u_i) \leq \ell_i(a_i,a_{-i}^*,\theta_i,u_i), \quad \forall a_i\in A_i.
\]
Therefore, $A^*$ is a set of all Nash equilibria of the induced game $(\ell_1,\dots,\ell_n)$.
Finally, from \eqref{eq:mech-ne}, we have that there exists a Nash equilibrium $a^*\in A^*$ of the induced game $(\ell_1,\dots,\ell_n)$ such that $f(\theta)=a^*$.

\subsection{Equivalent Formulation}

Suppose the incentivized cost $\ell_i$ of each driver is convex and differentiable in her own decision variable $a_i$. In that case, a Nash equilibrium of EDG $(\ell_1,\dots,\ell_n)$ exists, and the TSO can implement the EDIM $(f,u)$ in Nash equilibrium, provided that he knows the true types of the drivers. 
The questions that remain are: which Nash equilibrium should the TSO implement, and how to solve \eqref{eq:mech}?
To answer the first question, the TSO should choose the Nash equilibrium that minimizes the overall emissions of the transportation network subject to the budget constraint, i.e., the Nash equilibrium that fits his objective. 
To answer the second question, first notice that solving \eqref{eq:mech} directly is challenging because it is a bilevel optimization problem. 
However, we observe that \eqref{eq:mech} admits an equivalent formulation as a constrained optimization problem, which is straightforward to solve numerically.


\begin{thm}
    \label{thm:mech-equiv}
    Let \cref{assume:convex-diff} and \cref{assume:emission-decreasing} hold, and suppose the drivers truthfully report their types $\theta\in\Theta$. Then, the incentive mechanism $(f,u)$ given in \eqref{eq:mech} can be equivalently formulated as follows:
    \begin{subequations}
        \label{eq:mech-equiv}
        \begin{align}
            f(\theta) \in & \argmin_{a\in A} \sum_{i=1}^n x_i(a) 
            \label{eq:mech-equiv-cost} \\
            & \text{subject to}~ \sum_{i=1}^n |\D_{a_i} c_i(a,\theta_i)|_+ \leq b
            \label{eq:mech-equiv-bc}
        \end{align}
    \end{subequations}
    and for every $i\in\mc N$,
    \begin{equation}
        \label{eq:incentive-equiv}
        u_i(\theta) = |\D_{a_i} c_i(f(\theta),\theta_i)|_+.
    \end{equation}
\end{thm}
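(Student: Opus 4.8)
The plan is to eliminate the bilevel structure of \eqref{eq:mech} by replacing the Nash-equilibrium membership constraint \eqref{eq:mech-ne} with an explicit pointwise first-order condition on the incentives, and then to read off the minimal budget-feasible incentive in closed form. The central observation is that, under \cref{assume:convex-diff}, each incentivized cost $a_i\mapsto \ell_i(a_i,a_{-i},\theta_i,u_i)=c_i(a_i,a_{-i},\theta_i)-u_i a_i$ from \eqref{eq:cost-i-inc} is convex on $A_i=[0,1]$, so $a_i^*$ is a best response if and only if the variational inequality $(\D_{a_i}c_i(a^*,\theta_i)-u_i)(a_i-a_i^*)\ge 0$ holds for all $a_i\in[0,1]$. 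I would first record the three cases this produces at $a^*=f(\theta)$: $u_i=\D_{a_i}c_i(f(\theta),\theta_i)$ when $f_i(\theta)\in(0,1)$, $u_i\ge \D_{a_i}c_i(f(\theta),\theta_i)$ when $f_i(\theta)=1$, and $0\le u_i\le \D_{a_i}c_i(f(\theta),\theta_i)$ when $f_i(\theta)=0$.

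The second step connects this to obedience and extracts the closed-form incentive. By \cref{prop:IR-char}, obedience is equivalent to $u_i(\theta)\ge \theta_i\xi_i(a_i,f(\theta))+(1-\theta_i)\tau_i(a_i,f(\theta))$ for all $a_i<f_i(\theta)$; using \eqref{eq:xi-tau-def} the right-hand side is exactly the secant slope $\big(c_i(f(\theta),\theta_i)-c_i(a_i,f_{-i}(\theta),\theta_i)\big)/(f_i(\theta)-a_i)$ of the convex map $c_i(\cdot,f_{-i}(\theta),\theta_i)$. Since convexity makes these secant slopes non-decreasing in $a_i$, their supremum over $a_i<f_i(\theta)$ is the left derivative, which equals $\D_{a_i}c_i(f(\theta),\theta_i)$ by differentiability. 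Hence obedience reduces to $u_i(\theta)\ge \D_{a_i}c_i(f(\theta),\theta_i)$, and combined with $u_i(\theta)\in U_i=\bb R_+$ the smallest obedient incentive is precisely $u_i(\theta)=|\D_{a_i}c_i(f(\theta),\theta_i)|_+$, which is \eqref{eq:incentive-equiv}.

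Third, I would argue the reduction of the feasible set. Because the incentives enter \eqref{eq:mech} only through the budget constraint \eqref{eq:mech-bc} and never through the objective $\sum_i x_i(f)$, it is without loss of optimality to implement any target recommendation with the smallest incentive that keeps it a Nash equilibrium. Setting $u_i(\theta)=|\D_{a_i}c_i(f(\theta),\theta_i)|_+$ both certifies, via the case analysis of the first step together with \cref{lem:NE-exists} and \cref{lem:NE-obedience}, that $f(\theta)$ is implementable in Nash equilibrium, and makes the budget constraint read $\sum_i |\D_{a_i}c_i(f(\theta),\theta_i)|_+\le b$, which is exactly \eqref{eq:mech-equiv-bc}. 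Since the objective is unchanged, the minimizers of \eqref{eq:mech} and \eqref{eq:mech-equiv} coincide and carry the incentive \eqref{eq:incentive-equiv}.

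The main obstacle is the boundary behavior at $a_i\in\{0,1\}$ and the sign of $\D_{a_i}c_i$. The clean identity $u_i=\D_{a_i}c_i(f(\theta),\theta_i)$ holds only for interior recommendations; at $f_i(\theta)=0$ the exact equilibrium condition permits any $u_i\in[0,\D_{a_i}c_i(f(\theta),\theta_i)]$, so one must verify that selecting the value $|\D_{a_i}c_i(f(\theta),\theta_i)|_+$ still yields a valid equilibrium and does not spuriously alter the emission-minimizing optimum. Here I would invoke \cref{assume:emission-decreasing}: since $x_i$ is non-increasing in $a_i$, the sign of $\D_{a_i}c_i$ records exactly whether a driver is reluctant to eco-drive, and the $|\cdot|_+$ operator collapses the case $\D_{a_i}c_i<0$ -- where the driver eco-drives beyond the recommendation unprompted, so no incentive is needed -- to $u_i=0$. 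Making this case analysis airtight, in particular reconciling the minimal-incentive viewpoint with the equilibrium-membership constraint at the corners of $A$, is the step requiring the most care.
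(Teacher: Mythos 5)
Your route is fundamentally the paper's own: replace the equilibrium-membership constraint \eqref{eq:mech-ne} by each driver's first-order condition for the convex incentivized cost \eqref{eq:cost-i-inc}, read off $u_i(\theta)=|\D_{a_i}c_i(f(\theta),\theta_i)|_+$ as the minimal non-negative implementing incentive, and substitute into \eqref{eq:mech-bc} to get \eqref{eq:mech-equiv-bc}. Your variational-inequality case split at the corners of $A_i$ is in fact more explicit than the paper's stationarity argument, and the observation that incentives enter \eqref{eq:mech} only through the budget and never through the objective is the right justification for passing to minimal incentives.

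However, the step you defer --- ``reconciling the minimal-incentive viewpoint with the equilibrium-membership constraint at the corners of $A$'' --- is a genuine gap as written, not a routine verification. Concretely: if $\D_{a_i}c_i(f(\theta),\theta_i)<0$ while $f_i(\theta)<1$, then $u_i=|\D_{a_i}c_i(f(\theta),\theta_i)|_+=0$ and the driver's best response strictly exceeds $f_i(\theta)$; indeed no $u_i\in U_i=\bb R_+$ can make $f_i(\theta)$ a best response, since non-negative incentives only push best responses upward. Hence such a profile is feasible for \eqref{eq:mech-equiv} but \emph{not} implementable in \eqref{eq:mech}, and the equivalence of the two feasible sets breaks exactly there; your appeal to \cref{lem:NE-exists} and \cref{lem:NE-obedience} does not repair this, because those results say nothing about $f(\theta)$ itself being the equilibrium. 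The paper closes this case up front, as a ``degenerate case'': whenever $\D_{a_i}c_i(\cdot,f_{-i}(\theta),\theta_i)\le 0$, driver~$i$'s optimal choice is $a_i=1$ for \emph{any} $u_i\ge 0$, and by \cref{assume:emission-decreasing} recommending $f_i(\theta)=1$ weakly decreases $x_i$ while consuming no budget, so the TSO's optimizer can be taken with $f_i(\theta)=1$ there --- at which point $u_i=0$ genuinely implements $f_i(\theta)$ as a best response and \eqref{eq:incentive-equiv} holds. You gesture at exactly this (``the driver eco-drives beyond the recommendation unprompted''), but to complete the proof you must turn the gesture into the claim that any minimizer of \eqref{eq:mech-equiv} can be modified, without loss of optimality or feasibility, so that $f_i(\theta)=1$ whenever $\D_{a_i}c_i(f(\theta),\theta_i)\le 0$; only then do the two programs have the same optima on all of $A$, not just at interior recommendations.
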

\begin{proof}
    Define $A^*\subseteq A$ as in \eqref{eq:A-star}. Then, we prove the result by showing that \eqref{eq:mech-ne}, $f(\theta)\in A^*$, is equivalent to choosing the incentive $u_i(\theta)=|\D_{a_i} c_i(f(\theta),\theta_i)|_+$ for every $i\in\mc N$. This converts the bilevel optimization problem \eqref{eq:mech} to a constrained optimization problem \eqref{eq:mech-equiv}.

    First, we address a degenerate case where $\D_{a_i}c_i(a_i,f_{-i}(\theta),\theta_i) \leq 0$ for every $a_i\in\text{int}(A_i)$. Notice that by \cref{assume:convex-diff}, the nominal cost $c_i(a_i,a_{-i},\theta_i)$ is convex and differentiable with respect to $a_i\in A_i$. Therefore, for every $a_{-i}\in A_{-i}$ and $a_i,\hat a_i\in A_i$ such that $a_i\geq \hat a_i$, we have that $\D_{a_i} c_i(a_i,a_{-i},\theta_i) \geq \D_{a_i} c_i(\hat a_i,a_{-i},\theta_i)$. Thus, if $\D_{a_i}c_i(a_i,a_{-i},\theta_i) \leq 0$ for $a_i\to 1^-$, then $a_i=1$ is the minimizing solution to $\ell_i(a_i,a_{-i},\theta_i,u_i)$ for any $u_i\in U_i$. In such a case, opting for eco-driving level $a_i=1$ is the optimal choice for driver~$i$ and the TSO does not need to incentivize her, i.e., $u_i(\theta)=0$. Moreover, we have $f_i(\theta)=1$ because, by \cref{assume:emission-decreasing}, the emission function $x_i(a_i,a_{-i})$ is non-increasing in $a_i$, i.e., $x_i(1,a_{-i})\leq x_i(a_i,a_{-i})$ for every $a_i\in A_i=[0,1]$.

    In the remaining proof, we consider the case where $\D_{a_i} c_i(a_i,f_{-i}(\theta),\theta_i)>0$ for some $a_i\in A_i$.
    
    Let $f(\theta)\in A^*$, then for every $i\in\mc N$,
    \[
    \ell_i(f(\theta),\theta_i,u_i) \leq \ell_i(a_i,f_{-i}(\theta),\theta_i,u_i), \quad \forall a_i\in A_i.
    \]
    Since $\ell_i$ is convex and differentiable in $a_i$, the stationarity condition $\D_{a_i} \ell_i(f_i(\theta),f_{-i}(\theta),\theta_i,u_i)=0$ is sufficient for showing $f_i(\theta)$ is the minimum given that $a_{-i}=f_{-i}(\theta)$, i.e.,
    \[
    \D_{a_i} c_i(f_i(\theta),f_{-i}(\theta),\theta_i) - u_i(\theta) = 0
    \]
    which gives \eqref{eq:incentive-equiv}. 
    Now, let \eqref{eq:incentive-equiv} hold. Then, for every $i\in\mc N$, the stationarity condition $\D_{a_i} \ell_i(a_i,f_{-i}(\theta),\theta_i,u_i)=0$ holds if $a_i=f_i(\theta)$. Therefore, $f(\theta)\in A^*$.
\end{proof}

The equivalent formulation \eqref{eq:mech-equiv} and \eqref{eq:incentive-equiv} of the incentive mechanism \eqref{eq:mech} reiterates our claim that the TSO can implement any recommendation that minimizes the overall emissions as a Nash equilibrium, provided that he has a sufficient budget to allocate to the drivers. 
That is, the recommended eco-driving levels $f(\theta)$ minimize the overall emissions subject to the budget constraint.

Although \cref{lem:NE-obedience} has already shown that implementing EDIM in the Nash equilibrium ensures obedience, we further emphasize this by showing that the incentive $u_i(\theta)$ chosen as in \eqref{eq:incentive-equiv} satisfies the obedience condition \eqref{eq:IR-char1}. 
For every $i\in\mc N$, $\theta\in\Theta$, and $a_i\geq f_i(\theta)$, it holds that
\begin{multline*}
    u_i(\theta) = |\D_{a_i} c_i(f(\theta),\theta_i)|_+ 
    \geq \D_{a_i} c_i(f(\theta),\theta_i) \\
    \geq \theta_i \xi_i(a_i,f(\theta))+(1-\theta_i)\tau_i(a_i,f(\theta))
\end{multline*}
where $\xi_i$ and $\tau_i$ are given in \eqref{eq:xi-tau-def} and the last step in the above inequality is due to the convexity of the nominal cost $c_i$ (\cref{assume:convex-diff}), i.e., $\D_{a_i}c_i(f_i(\theta),f_{-i}(\theta),\theta_i) \geq \D_{a_i}c_i(a_i,f_{-i}(\theta),\theta_i)$ for every $a_i< f_i(\theta)$.

While the first-best EDIM provides theoretical insights under the idealized assumption that the TSO knows drivers' true types, this assumption rarely holds in practice. 
Without it, the mechanism \eqref{eq:mech} cannot guarantee compliance with the recommended eco-driving levels or achieve any emission reductions. 
This limitation motivates our development in the next section of incentive-compatible mechanisms that remain effective even when drivers strategically misreport their types.

\section{Second-best Eco-driving Incentive Mechanism}
\label{sec:mech-unknown-types}

In this section, we consider the case when the drivers may strategically report their types to the TSO. 
The TSO elicits the types $\theta_i$ from each driver~$i$, who may instead report $\hat\theta_i\in\Theta_i$. 
The reported type $\hat\theta_i$ may not be equal to the true type $\theta_i$ because by reporting her type, driver~$i$ aims to minimize her incentivized cost $\ell_i$ by maximizing her incentive $u_i(\hat\theta_i,\theta_{-i})$.

\subsection{Truthfulness}
In addition to minimizing emissions, the TSO aims to make the EDIM $(f,u)$ resilient to the strategic reporting of different types. 
That is, the TSO designs the eco-driving recommendation function $f:\Theta\to A$ and the incentive function $u:\Theta\to U$ such that the drivers do not gain anything by reporting their types untruthfully. 
In particular, reporting $\hat\theta_i=\theta_i$ minimizes $i$'s incentivized cost $\ell_i(f(\hat\theta_i,\theta_{-i}),\theta_i,u_i(\hat\theta_i,\theta_{-i}))$ under the EDIM $(f,u)$. 
Consideration of this additional requirement, known as the truthfulness constraint, renders the EDIM to be the second-best optimal.

\begin{define}
    \label{def:IC}
    The EDIM $(f,u)$ is called \emph{truthful} if, for every $i\in\mc N$, $\theta_{-i}\in\Theta_{-i}$, and $\theta_i,\hat\theta_i\in\Theta_i$, the incentivized cost $\ell_i(f(\theta),\theta_i,u_i(\theta))=c_i(f(\theta),\theta_i) - u_i(\theta) f_i(\theta)$ given in \eqref{eq:cost-i-inc} satisfies
    \begin{multline}
        \label{eq:IC-def}
        c_i(f(\theta_i,\theta_{-i}),\theta_i) - u_i(\theta_i,\theta_{-i}) f_i(\theta_i,\theta_{-i}) \\ 
        \leq c_i(f(\hat\theta_i,\theta_{-i}),\theta_i) - u_i(\hat\theta_i,\theta_{-i}) f_i(\hat\theta_i,\theta_{-i}).
    \end{multline}
\end{define}

Truthfulness states that, for all drivers~$i\in\mc N$ and all type profiles $(\theta_i,\theta_{-i})\in\Theta$, given that other drivers report their types truthfully, it is in the best interest of driver~$i$ also to report her type truthfully. Under this notion, reporting truthfully is a Nash equilibrium of a type-reporting game induced by the EDIM $(f,u)$ in which each driver reports $\hat\theta_i\in\Theta_i$ to the TSO to minimize her nominal cost $c_i$ while maximizing her incentive $u_i f_i$.

\begin{prop}
    \label{prop:IC-char}
    Let \cref{assume:convex-diff} hold. Then, the eco-driving incentive mechanism $(f,u)$ is truthful if and only if, for every $i\in\mc N$, $\ell_i(f(\theta),\theta_i,u_i)$ is concave with respect to $\theta_i$ and, for every $\theta_i\in\Theta_i$ and every $\theta_{-i}\in\Theta_{-i}$,
    \begin{multline}
        \label{eq:ic-char}
        \D_{\theta_i} \!\ell_i(f(\theta_i,\theta_{-i}),\theta_i,u_i(\theta_i,\theta_{-i})) \\ = x_i(f(\theta_i,\theta_{-i})\!) - y_i(f(\theta_i,\theta_{-i})\!).
    \end{multline}
\end{prop}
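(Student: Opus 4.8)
The plan is to reduce incentive compatibility to a standard envelope-plus-concavity characterization by exploiting the fact that the nominal cost $c_i$ is affine in the type $\theta_i$. First I would fix a driver~$i$ and a profile $\theta_{-i}$, and define the \emph{reporting cost} $V_i(\hat\theta_i;\theta_i) \eqdef c_i(f(\hat\theta_i,\theta_{-i}),\theta_i) - u_i(\hat\theta_i,\theta_{-i}) f_i(\hat\theta_i,\theta_{-i})$, so that by \cref{def:IC} the EDIM is incentive compatible exactly when, for every $\theta_i$, the truthful report $\hat\theta_i=\theta_i$ minimizes $V_i(\cdot\,;\theta_i)$ over $\Theta_i$. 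Substituting \eqref{eq:cost-i-nom} and collecting terms, I would write $V_i(\hat\theta_i;\theta_i) = \theta_i\, g_i(\hat\theta_i) + h_i(\hat\theta_i)$, where $g_i(\hat\theta_i)\eqdef x_i(f(\hat\theta_i,\theta_{-i})) - y_i(f(\hat\theta_i,\theta_{-i}))$ and $h_i(\hat\theta_i) \eqdef y_i(f(\hat\theta_i,\theta_{-i})) - u_i(\hat\theta_i,\theta_{-i})f_i(\hat\theta_i,\theta_{-i})$. The key structural observation is that, for each fixed report $\hat\theta_i$, the map $\theta_i\mapsto V_i(\hat\theta_i;\theta_i)$ is \emph{affine}, and its partial derivative in the true type is exactly $g_i$, the right-hand side of \eqref{eq:ic-char}. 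Writing $\Phi_i(\theta_i)\eqdef \ell_i(f(\theta),\theta_i,u_i) = V_i(\theta_i;\theta_i)$ for the truthful value, \cref{assume:convex-diff} (differentiability of $x_i,y_i$) together with differentiability of $f$ and $u$ (implicit in the derivative $\D_{\theta_i}\ell_i$ being well-defined) lets me apply the chain rule to decompose $\D_{\theta_i}\Phi_i(\theta_i) = \D_{\hat\theta_i}V_i(\theta_i;\theta_i) + g_i(\theta_i)$; hence \eqref{eq:ic-char} is equivalent to the local stationarity condition $\D_{\hat\theta_i} V_i(\theta_i;\theta_i)=0$.

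For the forward direction (IC $\Rightarrow$ the two conditions) I would argue as follows. Under IC, $\Phi_i(\theta_i) = \min_{\hat\theta_i\in\Theta_i} V_i(\hat\theta_i;\theta_i) = \min_{\hat\theta_i}[\theta_i g_i(\hat\theta_i) + h_i(\hat\theta_i)]$ is a pointwise minimum of a family of functions affine in $\theta_i$, and is therefore concave in $\theta_i$ --- this is exactly the required concavity of $\ell_i(f(\theta),\theta_i,u_i)$. Moreover, since $\hat\theta_i=\theta_i$ minimizes $V_i(\cdot\,;\theta_i)$, the stationarity condition $\D_{\hat\theta_i}V_i(\theta_i;\theta_i)=0$ holds, so the chain-rule decomposition above yields $\D_{\theta_i}\Phi_i(\theta_i)=g_i(\theta_i)$, which is \eqref{eq:ic-char}.

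For the converse I would use the envelope condition to eliminate $h_i$. Evaluating on the diagonal gives $\Phi_i(\hat\theta_i) = \hat\theta_i g_i(\hat\theta_i) + h_i(\hat\theta_i)$, hence $h_i(\hat\theta_i) = \Phi_i(\hat\theta_i) - \hat\theta_i g_i(\hat\theta_i)$; substituting this and using $g_i(\hat\theta_i) = \D_{\theta_i}\Phi_i(\hat\theta_i)$ from \eqref{eq:ic-char}, the reporting cost collapses to
\[
V_i(\hat\theta_i;\theta_i) = \Phi_i(\hat\theta_i) + (\theta_i - \hat\theta_i)\,\D_{\theta_i}\Phi_i(\hat\theta_i).
\]
Since $\Phi_i$ is concave, the supporting-tangent inequality $\Phi_i(\theta_i)\leq \Phi_i(\hat\theta_i)+(\theta_i-\hat\theta_i)\,\D_{\theta_i}\Phi_i(\hat\theta_i)$ holds for all $\theta_i,\hat\theta_i\in\Theta_i$, and combining it with the display gives $\Phi_i(\theta_i) = V_i(\theta_i;\theta_i) \leq V_i(\hat\theta_i;\theta_i)$, which is precisely \eqref{eq:IC-def}.

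The step I expect to be the main obstacle is making the envelope/stationarity argument rigorous at the boundary of $\Theta_i=[0,1]$ and at any kinks of the concave value $\Phi_i$: the condition $\D_{\hat\theta_i}V_i=0$ is an interior first-order condition, so at boundary types or non-differentiability points it must be replaced by the appropriate one-sided inequality or a subgradient (Milgrom--Segal envelope) statement. The converse is robust to this, since the supporting-hyperplane inequality for a concave function only requires a supergradient, which always exists on $\text{int}(\Theta_i)$; the care is therefore needed mainly in the forward direction to recover \eqref{eq:ic-char} from optimality while respecting the domain constraint.
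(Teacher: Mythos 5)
Your proof is correct (modulo the boundary caveat you flag yourself, which the paper's own proof silently shares), and while its algebraic core coincides with the paper's, your necessity direction takes a genuinely different route. The shared core: both you and the paper exploit the affinity of $c_i$ in $\theta_i$ to reduce incentive compatibility to the tangent-line inequality $\ell_i(f(\theta),\theta_i,u_i)\leq \ell_i(f(\hat\theta_i,\theta_{-i}),\hat\theta_i,u_i)+(\theta_i-\hat\theta_i)\,g_i(\hat\theta_i)$, where $g_i(\hat\theta_i)\eqdef x_i(f(\hat\theta_i,\theta_{-i}))-y_i(f(\hat\theta_i,\theta_{-i}))$; the paper obtains this by adding and subtracting $c_i(f(\hat\theta_i,\theta_{-i}),\hat\theta_i)$ in \eqref{eq:IC-def}, you obtain it by eliminating $h_i$ along the diagonal, and your converse (conditions $\Rightarrow$ IC) is then exactly the paper's sufficiency step, i.e., the supporting-tangent inequality for the concave truthful value $\Phi_i(\theta_i)\eqdef\ell_i(f(\theta),\theta_i,u_i)$. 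Where you genuinely differ is in proving that IC implies the two conditions. The paper argues by contradiction in two cases: if $\Phi_i$ is non-concave while \eqref{eq:ic-char} holds, a violation of IC is asserted; if $\Phi_i$ is concave but \eqref{eq:ic-char} fails, the sign cases \eqref{eq:cond1}--\eqref{eq:cond2} combined with an add-and-subtract manipulation yield a contradiction. You instead argue directly: under IC, $\Phi_i(\theta_i)=\min_{\hat\theta_i}[\theta_i g_i(\hat\theta_i)+h_i(\hat\theta_i)]$ is a pointwise minimum of functions affine in $\theta_i$, hence concave --- this is cleaner than the paper's assertion and derives concavity from IC alone, without invoking \eqref{eq:ic-char}; then \eqref{eq:ic-char} follows from first-order optimality of truthful reporting via the envelope decomposition. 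Two remarks on what each approach buys. Your envelope step, as written, needs $f$ and $u_i$ to be differentiable in the report $\hat\theta_i$ separately, which is slightly more than the proposition presupposes (only differentiability of the diagonal composition $\Phi_i$); this can be repaired without the chain rule by a supergradient argument: under IC the affine map $t\mapsto t\,g_i(\hat\theta_i^0)+h_i(\hat\theta_i^0)$ lies above $\Phi_i$ and touches it at $t=\hat\theta_i^0$, so $g_i(\hat\theta_i^0)$ is a supergradient of the concave $\Phi_i$ there, and differentiability of $\Phi_i$ at an interior point forces $g_i(\hat\theta_i^0)=\D_{\theta_i}\Phi_i(\hat\theta_i^0)$. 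Second, the endpoint issue you flag (at $\theta_i\in\{0,1\}$ optimality gives only a one-sided inequality, so equality in \eqref{eq:ic-char} need not follow from IC) is a real limitation of the necessity direction as stated; the paper's contradiction argument has exactly the same defect, since it must choose $\hat\theta_i$ on a prescribed side of $\theta_i$, which is impossible at the endpoints of $\Theta_i=[0,1]$. So your proof is not weaker than the paper's on this point --- it is merely explicit about it.
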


\cref{prop:IC-char}, proved in \cref{appendix:prop:IC}, has several implications in terms of designing an incentive-compatible EDIM. 
An immediate consequence of \eqref{eq:ic-char} is that the following differential equation must be satisfied:
\begin{equation*}
    \D_{\theta_i} \!\big(u_i(\theta)\!\cdot\! f_i(\theta)\big) \!=\! \theta_i\nabla_f x_i^\T \D_{\theta_i} f(\theta) + (1-\theta_i) \nabla_f y_i^\T \D_{\theta_i} f(\theta).
\end{equation*}
It can be observed that if the incentive $u_i$ and the recommendation $f$ do not depend on the reported type $\hat\theta_i$, for every $i\in\mc N$, then the above equation holds.


\subsection{Implementation}

When the drivers report their types strategically, then the EDIM $(f,u)$ is incentive compatible\footnote{Here, we mean the \textit{ex-post} incentive compatibility: Given that $-i$ report $\theta_{-i}$ truthfully and comply with the recommendation $f_{-i}$, each driver $i$ finds it optimal to report $\theta_i$ truthfully and comply with the recommendation $f_i$.} if it is truthful and obedient \cite{bonatti2024}. 
Consider an EDIM $(f,u)$, which will be referred to as \textit{second-best}, where
\begin{subequations}
    \label{eq:mech-implementable}
    \begin{align}
        f \in & \argmin_{a\in A} \sum_{i=1}^n x_i(a) 
        \label{eq:mech-implementable-cost} \\
        & \text{subject to}~ \sum_{i=1}^n |\D_{a_i}y_i(a_i,a_{-i})|_+ \leq b
        \label{eq:mech-implementable-bc}
    \end{align}
\end{subequations}
and for every $i\in\mc N$,
\begin{equation}
    \label{eq:incentive-implementable}
    u_i = |\D_{a_i} y_i(f_i,f_{-i})|_+.
\end{equation}

\begin{thm}
    Let \cref{assume:convex-diff} and \cref{assume:emission-decreasing} hold. Then, the EDIM given by \eqref{eq:mech-implementable}-\eqref{eq:incentive-implementable} is truthful and obedient.
\end{thm}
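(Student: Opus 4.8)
The plan is to verify obedience and incentive compatibility separately, invoking the two characterizations already in hand: \cref{prop:IR-char} for obedience and \cref{prop:IC-char} for incentive compatibility. The feature that streamlines both arguments is that, in the second-best mechanism \eqref{eq:mech-implementable}--\eqref{eq:incentive-implementable}, the recommendation $f$ and the incentives $u_i$ are built solely from the emission and travel-time functions and carry no dependence on the reported type $\theta$.

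I would dispatch incentive compatibility first, since it is almost immediate. Because $f$ and $u_i$ do not depend on $\theta_i$, the incentivized cost
\[
\ell_i(f(\theta),\theta_i,u_i) = \theta_i x_i(f) + (1-\theta_i) y_i(f) - u_i f_i
\]
is affine in $\theta_i$ (the quantities $x_i(f)$, $y_i(f)$, $u_i$, and $f_i$ are all constants once $\theta_{-i}$ is fixed), hence concave, so the concavity condition of \cref{prop:IC-char} holds. Differentiating gives $\D_{\theta_i}\ell_i = x_i(f) - y_i(f)$, which is exactly \eqref{eq:ic-char}. Incentive compatibility then follows directly from \cref{prop:IC-char}; this is precisely the scenario anticipated in the discussion following that proposition, where $u_i$ and $f$ are independent of the report.

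The main work is obedience. By \cref{prop:IR-char} it suffices to show, for every $i\in\mc N$, every $\theta_i\in\Theta_i$, and every $a_i<f_i$, that
\[
u_i = |\D_{a_i} y_i(f)|_+ \ \geq\ \theta_i\,\xi_i(a_i,f) + (1-\theta_i)\,\tau_i(a_i,f),
\]
with $\xi_i,\tau_i$ as in \eqref{eq:xi-tau-def}. I would first bound the right-hand side, a convex combination, by $\max\{\xi_i(a_i,f),\tau_i(a_i,f)\}$. Two ingredients then finish the estimate: \cref{assume:emission-decreasing} gives $\xi_i(a_i,f)\leq 0$, since its numerator $x_i(f)-x_i(a_i,f_{-i})$ is non-positive while $f_i-a_i>0$; and the convexity of $y_i$ in $a_i$ from \cref{assume:convex-diff} yields the secant-slope bound $\tau_i(a_i,f)\leq \D_{a_i} y_i(f)$, because for a convex function the slope of the chord joining $a_i$ and $f_i$ is at most the derivative at the right endpoint $f_i$. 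Combining, $\max\{\xi_i,\tau_i\}\leq \max\{0,\tau_i\}\leq \max\{0,\D_{a_i}y_i(f)\} = u_i$, which is the obedience condition \eqref{eq:IR-char1}.

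I expect the secant-slope inequality to be the only delicate point: one must invoke convexity in the correct direction (an upper bound by the derivative at $f_i$, not a lower bound by the derivative at $a_i$), and observe that the truncation $|\cdot|_+$ in $u_i$, together with $\xi_i\leq 0$, is exactly what covers the worst case uniformly over $\theta_i\in[0,1]$. The boundary situation $f_i=1$ poses no difficulty, as the one-sided derivative guaranteed by \cref{assume:convex-diff} keeps the secant bound valid up to the endpoint, while the case $f_i=0$ is vacuous since no $a_i<f_i$ exists.
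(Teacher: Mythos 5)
Your proof is correct and takes essentially the same approach as the paper's: incentive compatibility is dispatched via \cref{prop:IC-char} by noting that $f$ and $u$ are type-independent so $\ell_i(f,\theta_i,u_i)$ is affine (hence concave) in $\theta_i$ with derivative $x_i(f)-y_i(f)$, and obedience is verified via \cref{prop:IR-char} using $\xi_i(a_i,f)\leq 0$ (\cref{assume:emission-decreasing}) together with the secant-slope bound $\tau_i(a_i,f)\leq \D_{a_i}y_i(f)$ from convexity. The only divergence is cosmetic: the paper chains $u_i \geq \D_{a_i}y_i(f) \geq \tau_i \geq (1-\theta_i)\tau_i \geq \theta_i\xi_i + (1-\theta_i)\tau_i$, whereas your route through $\max\{0,\tau_i\}$ rearranges the same ingredients and, as a small bonus, covers a possibly negative $\tau_i$ without relying on the intermediate step $\tau_i\geq(1-\theta_i)\tau_i$, which implicitly presumes $\tau_i\geq 0$.
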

\begin{proof}
    Firstly, we show that choosing the incentive as in \eqref{eq:incentive-implementable} satisfies the condition of obedience \eqref{eq:IR-char1}. For every $a_i< f_i(\theta)$, it holds that 
    \begin{align*}
        u_i = |\D_{a_i} y_i(f)|_+
        & \geq \D_{a_i} y_i(f) \\
        & \overset{\text{(a)}}{\geq} \tau_i(a_i,f(\theta)) \\
        & \overset{\text{(b)}}{\geq} (1-\theta_i)\tau_i(a_i,f(\theta)) \\
        & \overset{\text{(c)}}{\geq} \theta_i \xi_i(a_i,f(\theta)) + (1-\theta_i) \tau_i(a_i,f(\theta))
    \end{align*}
    where $\xi_i$ and $\tau_i$ are given in \eqref{eq:xi-tau-def}, and (a) is because of the convexity of $y_i$ with respect to $a_i$, (b) is because $\theta_i\in[0,1]$, and (c) is because of \cref{assume:emission-decreasing} which implies $\xi_i(a_i,f(\theta))\leq 0$ for every $a_i< f_i(\theta)$. 
    
    Secondly, we show that the EDIM \eqref{eq:mech-implementable}-\eqref{eq:incentive-implementable} is truthful by showing that the conditions of \cref{prop:IC-char} hold. Notice that \eqref{eq:mech-implementable} and \eqref{eq:incentive-implementable} do not depend on $\theta_i$. Thus, $\D_{\theta_i} u_i(\theta)=0$ and $\D_{\theta_i} f(\theta) = 0_n$, implying
    \begin{align*}
        \D_{\theta_i} \ell_i(f,\theta_i,u_i) & = x_i(f) + \theta_i \nabla_f x_i^\T \D_{\theta_i} f - y_i(f) \\
        & \quad\quad + (1-\theta_i) \nabla_f y_i^\T \D_{\theta_i} f - \D_{\theta_i} (u_i\cdot f_i) \\
        &= x_i(f) - y_i(f).
    \end{align*}
    Moreover, $\ell_i(f,\theta_i,u_i) = \theta_i x_i(f) + (1-\theta_i) y_i(f) - u_i f_i$ is linear in $\theta_i$, thus it is concave. This concludes the proof.
\end{proof}

Under the first-best EDIM, each driver is compensated according to the rate of increase in her cost $c_i$ at the recommendation $f\in A$. 
This is because the drivers are assumed to report truthfully, which allows the TSO to compute $\D_{a_i} c_i(f(\theta),\theta_i)$.
The second-best EDIM, assuming that drivers report strategically, compensates each driver according to the rate of increase in her travel time $y_i$ at the recommendation $f\in A$. 
The incentive in the second-best EDIM is larger than the incentive in the first-best EDIM, i.e., for every $a\in A$, $i\in\mc N$, and $\theta_i\in\Theta_i$,
\begin{multline*}
    |\D_{a_i} y_i(a)|_+ \geq (1-\theta_i) |\D_{a_i} y_i(a)|_+ \\
    = \theta_i|\D_{a_i}x_i(a)|_+ + (1-\theta_i)|\D_{a_i} y_i(a)|_+
    \geq |\D_{a_i} c_i(a,\theta_i)|_+
\end{multline*}
where we used $\D_{a_i}x_i(a) \leq 0$ because of \cref{assume:emission-decreasing}. 

From the discussion above, it might appear that enforcing the truthfulness constraint comes at a cost. 
That is, given a fixed total budget $b\in\bb R_+$, one may argue that the TSO can achieve higher eco-driving levels in the first-best EDIM than in the second-best EDIM. 
However, this holds only when the drivers report truthfully. 
When drivers report strategically, the first-best EDIM may no longer be obedient, and drivers might find lower eco-driving levels than the recommended $f$ optimal at equilibrium. 
Consequently, the first-best EDIM can result in higher overall emissions compared to the second-best under strategic reporting.

Another observation is that the second-best EDIM \eqref{eq:mech-implementable}-\eqref{eq:incentive-implementable} does not implement $f$ as a Nash equilibrium in the sense that $a=f$ is not a Nash equilibrium. 
However, since the second-best EDIM is obedient and the EDG it induces has a Nash equilibrium (\cref{lem:NE-exists}), the eco-driving profiles at Nash equilibria are higher than the recommendation $f$. 
That is, for every $a^*\in A^*$, where $A^*$ is a set of Nash equilibria given by \eqref{eq:A-star}, it holds that $a_i^*\geq f_i$ for every $i\in\mc N$. 
Therefore, the second-best EDIM implements $f$ in a Nash equilibrium if we refine the definition of implementation in a Nash equilibrium. 
Under such a definition, the mechanism guarantees that any equilibrium action profile will result in eco-driving levels at least as high as the recommended levels $f\in A$.


\section{Numerical Simulations}
\label{sec:simulation}

This section presents simulations designed to demonstrate the theoretical properties of our proposed eco-driving incentive mechanisms. We illustrate how the framework conceptually captures strategic behavior and driver interactions. We provide a clear proof-of-concept demonstration on the effects of budget constraints, driver types, and strategic misreporting on the equilibrium outcomes and overall emissions reduction.

\subsection{Simulation Setup}
Consider $n=10$ drivers in a transportation network who interact with each other according to a weighted interaction matrix $W=[w_{ij}]_{i,j=1,\dots,n}$, where $w_{ii}=1$ and $w_{ij}\in[0,1]$ is the weight with which the eco-driving level $a_j$ of driver~$j$ influences the emissions and travel time of driver~$i$ in expectation along her route $\mc R_i$. 
For simulation purposes, we consider the emission function of each driver~$i$ to be
\begin{equation}
    \label{eq:em-fun}
    x_i(a_i,a_{-i}) = \ol x_i \alpha_i^{a_i + \sum_{j\in\mc N_i} w_{ij} a_j}
\end{equation}
where $\alpha_i\in(0,1)$, and $\ol x_i\in\bb R_+^*$ is the maximum emissions when $a=0_n$. 
This emissions function, inspired by \cite{yin2006}, allows drivers to reduce emissions by increasing eco-driving compliance. It also captures driver interactions: when surrounding vehicles maintain higher eco-driving levels $a_{-i}$, driver~$i$ achieves lower emissions at any given $a_i$. This reflects how eco-driving vehicles create smoother traffic flow with fewer disruptions, \cite{barth2009, yang2016}, enabling individual drivers to reduce emissions more effectively while maintaining traffic harmony, thereby decreasing network-wide emissions.

Consider the travel time function of each driver~$i$ to be
\begin{equation}
    \label{eq:tt-fun}
    y_i(a_i,a_{-i}) \!=\! \beta_i\!\!\left(\! a_i - \frac{\sum_{j\in\mc N_i}\! w_{ij} a_j}{\sum_{j\in\mc N_i}\! w_{ij}} \!\right)^2 \!\!+ \gamma_i\!\! \sum_{j\in\mc N_i}\! w_{ij} a_j + \ul y_i
\end{equation}
where $\beta_i,\gamma_i\in\bb R_+$, and $\ul y_i\in\bb R_+^*$ is the minimum travel time when $a=0_n$. 
This travel time function captures a key trade-off: when surrounding drivers adopt higher eco-driving levels, driver~$i$ experiences increased travel times unless she similarly increases her own eco-driving level. 
This reflects how eco-driving vehicles typically maintain slower, steadier speeds that may constrain non-eco-driving vehicles in traffic. 
Both emission and travel time functions satisfy our mathematical requirements (\cref{assume:convex-diff} and \ref{assume:emission-decreasing}): the emission function is convex, differentiable, and decreasing in $a_i$, while the travel time function is convex and differentiable with respect to $a_i$.

In \eqref{eq:em-fun} and \eqref{eq:tt-fun}, the interaction weight $w_{ij}$, for every $i,j\in\mc N$ and $j\neq i$, is chosen to be $0$ with probability $0.5$ and uniformly randomly in $[0,1]$ with probability $0.5$. 
Denoting $\mc U[a,b]$ to be the uniform distribution over $[a,b]$, for some real numbers $a\leq b$, we choose $\alpha_i\sim\mc U[0.6,0.8]$, $\beta_i\sim\mc U[2,3]$, and $\gamma_i\sim\mc U[3,4]$ for every $i\in\mc N$. 
Moreover, we let $\ol x_i=4$ and $\ul y_i=1$, for every $i\in\mc N$.
Similarly, for every $i\in\mc N$, the type $\theta_i\sim\mc U[0,0.4]$ of driver~$i$, which means that in general drivers put more weight $1-\theta_i$ on minimizing their travel times than minimizing their emissions. 
We consider the first-best EDIM $(f,u)$ given by \eqref{eq:mech-equiv} and \eqref{eq:incentive-equiv}, and the second-best EDIM $(f,u)$ given by \eqref{eq:mech-implementable} and \eqref{eq:incentive-implementable}. 
We use MATLAB's Optimization Toolbox to solve \eqref{eq:mech-equiv} and \eqref{eq:mech-implementable}.

\subsection{Obedience and Truthfulness}

Assuming that all other drivers truthfully report their types, we illustrate the outcome when driver~$i$, $i=1$, misreports her type $\hat\theta_i\neq \theta_i$. Suppose the total budget is $b=3$. \cref{fig:FB-levels} shows the result obtained by the first-best EDIM in terms of the recommended eco-driving level $f_i(\hat\theta_i,\theta_{-i})$ and the optimal eco-driving level
\[
a_i^{\mathrm{opt}} = \argmin_{a_i\in A_i} \ell_i(a_i,f_{-i}(\hat\theta_i,\theta_{-i}),\theta_i,u_i(\hat\theta_i,\theta_{-i}))
\]
where, for illustration purposes, we assume that driver~$i$ knows $f_{-i}(\hat\theta_i,\theta_{-i})$ so that the above optimization problem is well-posed. 
If driver~$i$ underreports her type, $\hat\theta_i<\theta_i$, then we see that the first-best EDIM ensures obedience, i.e., $a_i^{\mathrm{opt}}\geq f_i$. However, when driver~$i$ overreports her type, $\hat\theta_i>\theta_i$, then the first-best EDIM is not obedient anymore because $a_i^{\mathrm{opt}}< f_i$. This is because overreporting one's type results in higher recommended eco-driving levels that contradict the driver's actual preferences, making compliance suboptimal.

On the other hand, the second-best EDIM is obedient, as illustrated in \cref{fig:SB-levels}, because the incentive amount $u_i$ in \eqref{eq:incentive-implementable} is sufficiently large and is indifferent to the reported type $\hat\theta_i$. 
However, observe that both the recommended and optimal eco-driving levels of driver~$i$ are, in general, lower under the second-best EDIM than the first-best. 
This is because enforcing the truthfulness constraint makes the mechanism conservative. 
Moreover, the gap between the eco-driving levels of the first-best and the second-best is huge because only one driver is misreporting her type. 
If the truthfulness is not enforced and all drivers misreport their types, the performance of the first-best EDIM will worsen. 

\begin{figure}[!t]
    \centering
    \subfloat[First-best EDIM is not obedient when driver~$i$ reports $\hat\theta_i$ greater than the true $\theta_i$. \label{fig:FB-levels}]{\includegraphics[width=0.7\linewidth]{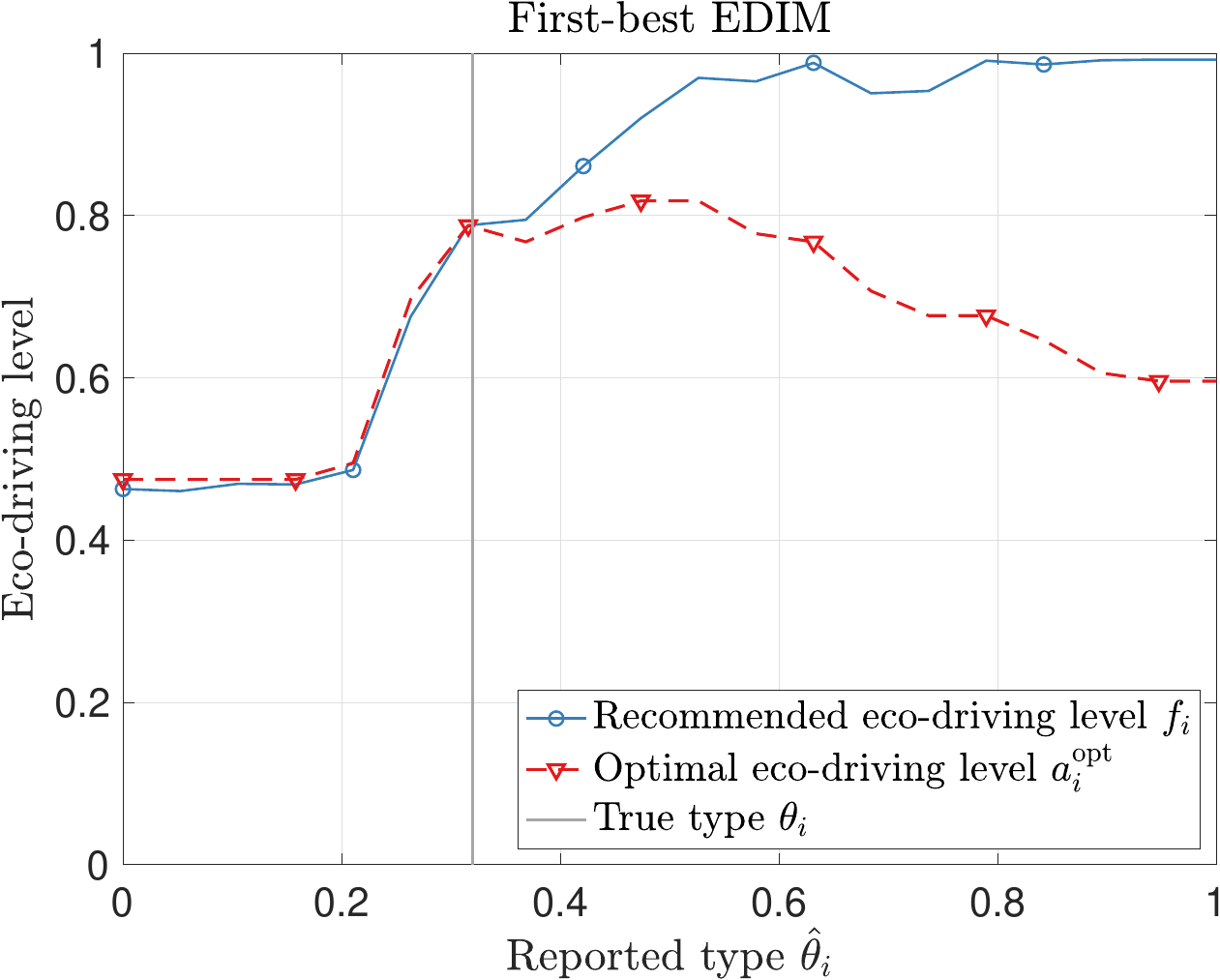}}

    \subfloat[Second-best EDIM is always obedient irrespective of the reported $\hat\theta_i$. \label{fig:SB-levels}]{\includegraphics[width=0.7\linewidth]{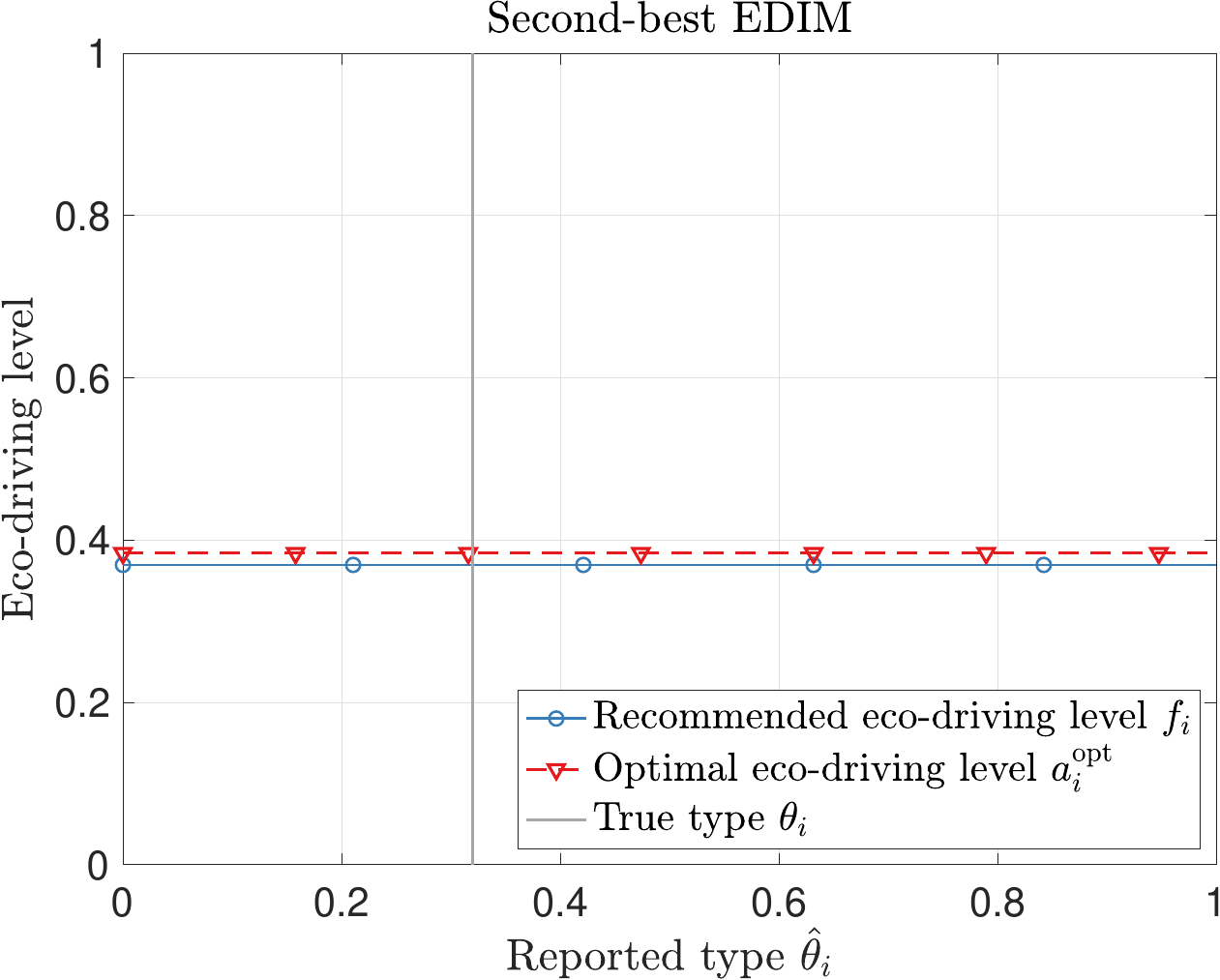}}
    
    \caption{Recommended and optimal eco-driving levels as functions of the reported type $\hat\theta_i$.}
    \label{fig:levels}
\end{figure}

\begin{figure}[!t]
    \centering
    \subfloat[First-best EDIM is not truthful as it is optimal for driver~$i$ to report $\hat\theta_i$ greater than the true $\theta_i$. \label{fig:FB-cost}]{\includegraphics[width=0.985\linewidth]{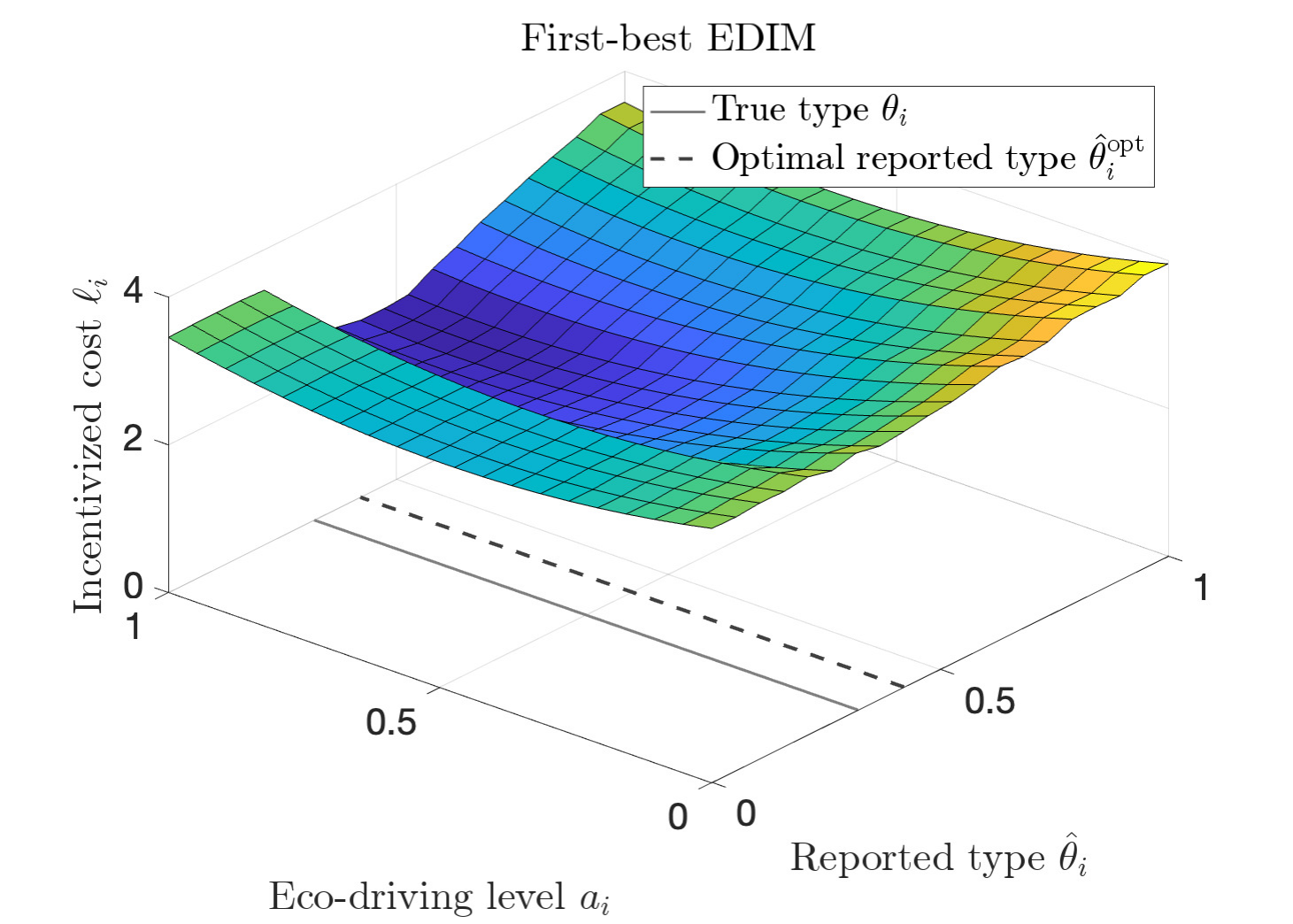}}

    \subfloat[Second-best EDIM is truthful as driver~$i$ does not gain anything by reporting untruthfully. \label{fig:SB-cost}]{\includegraphics[width=0.985\linewidth]{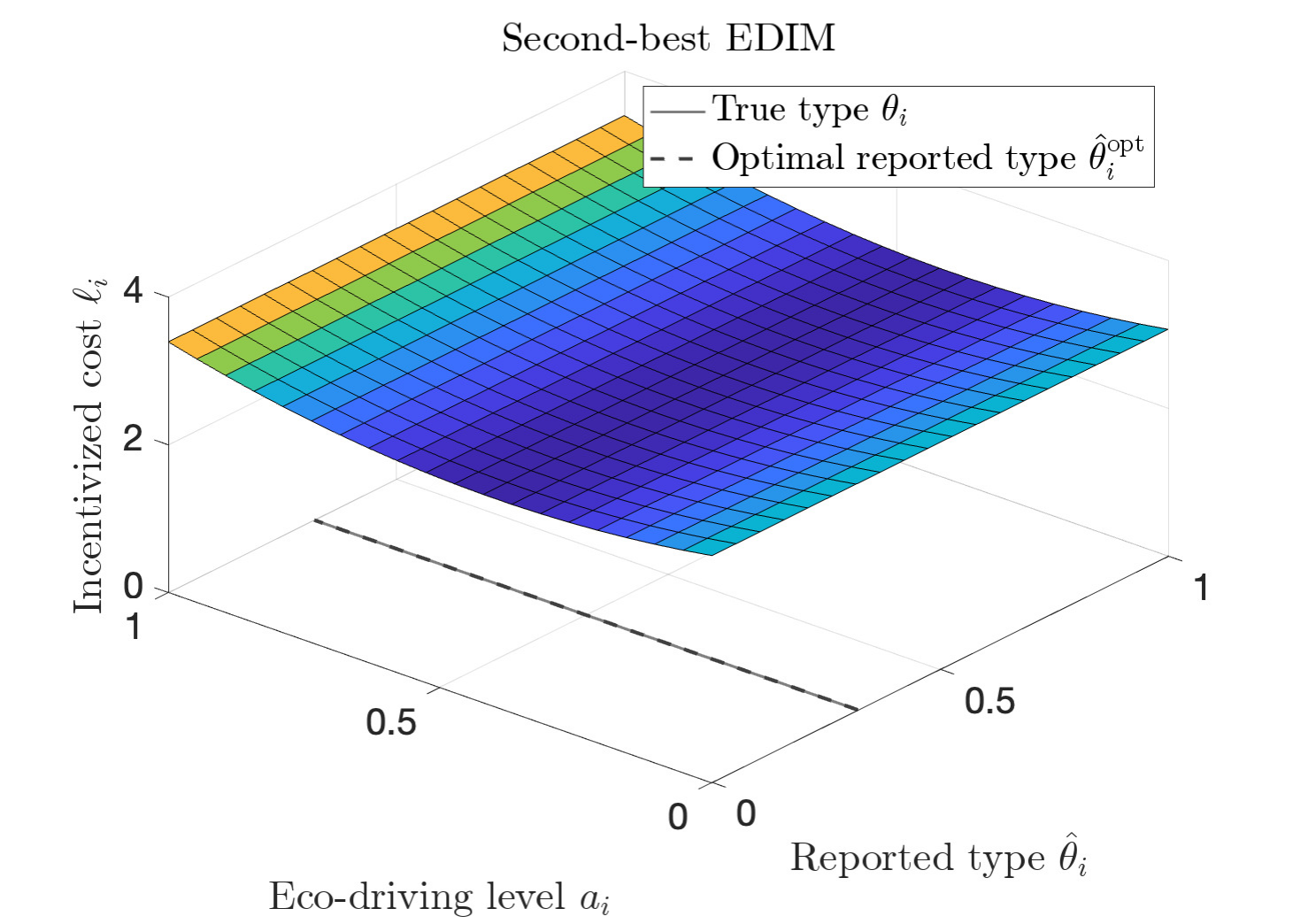}}
    
    \caption{Incentivized cost $\ell_i(a_i,f_{-i}(\hat\theta_i,\theta_{-i}),\theta_i,u_i(\hat\theta_i,\theta_{-i}))$ as a function of eco-driving level $a_i$ and reported type $\hat\theta_i$.}
    \label{fig:cost}
\end{figure}

To demonstrate truthfulness, we illustrate the incentivized cost $\ell_i$ as a function of driver~$i$'s reported type $\hat\theta_i$ and eco-driving level $a_i$ in \cref{fig:cost}, where we assume that all other drivers report truthfully $\hat\theta_{-i}=\theta_{-i}$ and comply with the recommendation $a_{-i}=f_{-i}(\hat\theta_i,\theta_{-i})$. We see that under the first-best EDIM, driver~$i$ may find it optimal to overreport her type to minimize her cost $\ell_i$, whereas the second-best EDIM ensures that misreporting does not gain anything for driver~$i$. In fact, truthful reporting is a Nash equilibrium strategy for $i$.

\subsection{Overall Emissions Reduction with Increased Budget}

\begin{figure}[!t]
    \centering
    \includegraphics[width = 0.82\linewidth]{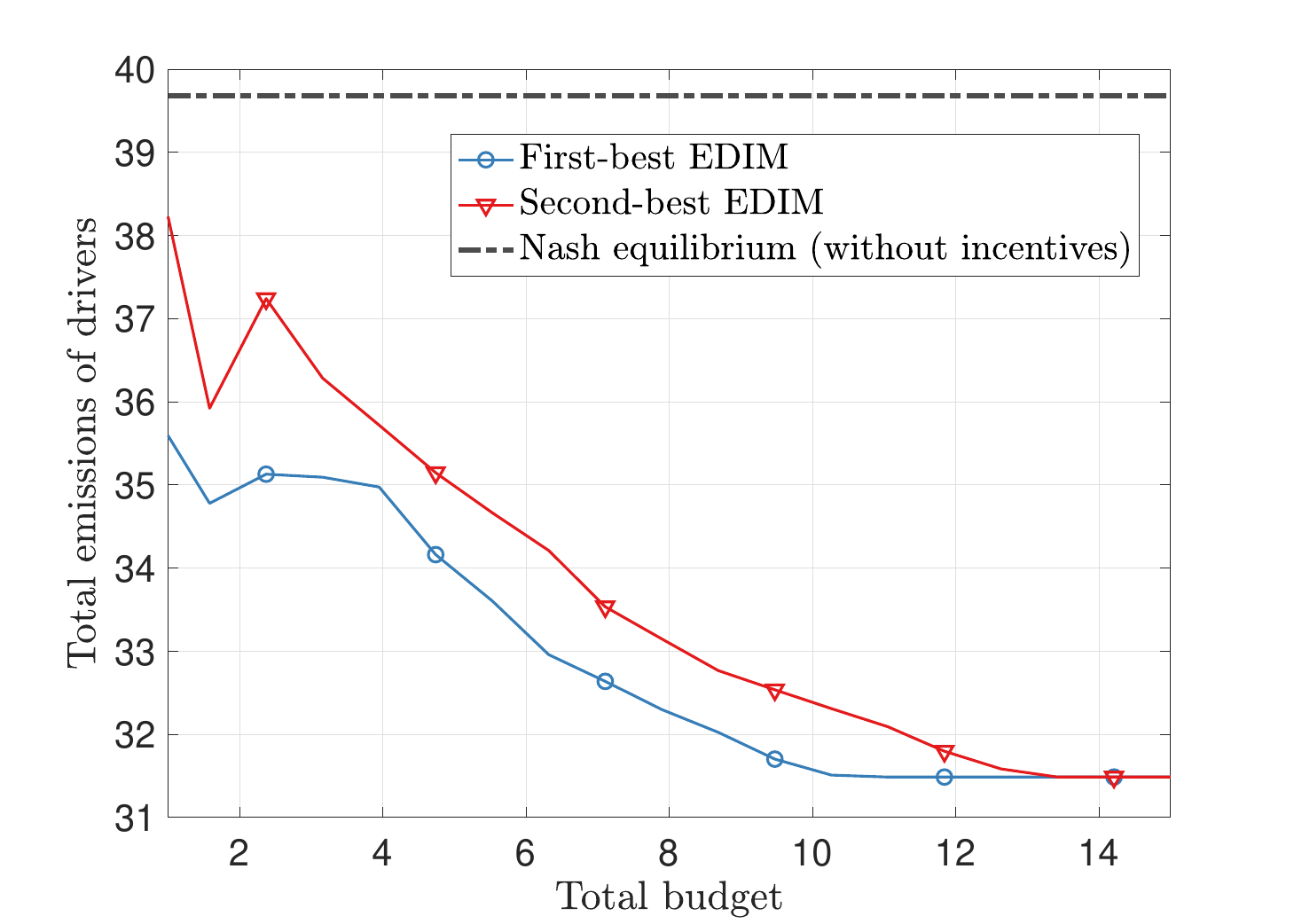}
    \caption{Emissions of drivers as a function of the budget.}
    \label{fig:em-tt-bud}
\end{figure}

We demonstrate the effect of the total budget $b$ on the overall emissions of the network, assuming that drivers truthfully report their types. 
\cref{fig:em-tt-bud} shows the total emissions of drivers under different values of the total budget and compares it to the case when there are no incentives and drivers choose their eco-driving levels based on the Nash equilibrium.
For the incentive mechanisms, there is a decreasing trend in emissions because, with a higher budget, the TSO can implement higher eco-driving levels under both first-best and second-best EDIMs, resulting in lower emissions.

Since the first-best EDIM results in higher eco-driving levels than the eco-driving levels of the second-best, we observe in \cref{fig:em-tt-bud} that with the same budget, the first-best EDIM in general achieves lower emissions than the second-best. 
Moreover, notice that for a budget $b>7$, the emissions under the first-best EDIM converge to a constant value. 
This is because, with a larger budget, the TSO can, in principle, incentivize all the drivers to choose maximum eco-driving levels $a=f(\theta)=1_n$, where they fully comply with the eco-driving guidance. 
On the other hand, under the second-best EDIM, $a=f(\theta)=1_n$ is achieved for budget $b>9$. 

To conclude, while the first-best EDIM achieves lower emissions than the second-best EDIM when drivers are truthful, it is shown to violate the obedience constraint when even a single driver misreports her type. 
Moreover, under the first-best EDIM, it is optimal for drivers to misreport their types because their recommendations and incentives depend on their respective reported types.
Conversely, the second-best EDIM is both obedient and truthful, as drivers not only comply with the recommended eco-driving levels but also find it optimal to report their types truthfully.
If the TSO has a sufficiently large budget, he can incentivize drivers to achieve maximum eco-driving levels and ensure their full compliance. 
However, under the second-best EDIM, the budget required to ensure full compliance is larger than that required by the first-best EDIM with known types. 
Therefore, in addition to the emission cost demonstrated in \cref{fig:em-tt-bud}, the TSO also incurs additional budget costs to incentivize similar levels of eco-driving when drivers strategically report their types.

\section{Concluding Remarks}
\label{sec:conclusion}

This paper introduces a theoretical framework for incentive mechanisms to promote eco-driving with the overarching goal of minimizing emissions in urban transportation networks.
The timeline of the incentive mechanisms is as follows: 1)~the TSO elicits preferences or types from the drivers, 2)~the TSO computes eco-driving levels that minimize the overall emissions subject to the limited budget which is to be optimally allocated as incentives to the drivers, and 3)~the TSO provides each driver with the recommended eco-driving level that she needs to comply with and offers an incentive rate that shapes her cost function.
When drivers truthfully report their types, the first-best EDIM implements the recommended eco-driving levels at the Nash equilibrium. This implies obedience, as drivers find it optimal to choose their eco-driving levels at least as high as the recommended levels. 
However, when the drivers strategically report their types, the obedience constraint may not hold. 
Therefore, the second-best EDIM is proposed, which, in addition to ensuring obedience constraint, also incorporates the truthfulness constraint, guaranteeing that the drivers do not gain any advantage by misreporting their types. Since the second-best EDIM is robust, it is conservative and generally requires more budget to achieve similar outcomes as the first-best EDIM.
However, it must be clarified that the first-best EDIM achieves lower emissions than the second-best only when the drivers are truthful. 
In general, it is impractical because of its vulnerability to strategic misreporting. 
On the other hand, the second-best EDIM ensures that, irrespective of the reported types, the equilibrium eco-driving profile of drivers in the induced eco-driving game is larger than the recommended levels obtained by the mechanism.

This work offers a mathematical framework for promoting eco-driving that considers both network effects and driver behavior. 
By incorporating strategic considerations and individual preferences, the proposed second-best mechanism provides a robust and effective approach to achieving sustainable urban transportation. 
However, before real-world implementation can be considered, further research is needed regarding (1)~estimating emissions and travel time functions, (2)~handling uncertainties in data, and (3)~mapping driving policies of drivers to their eco-driving levels. 
We also remark that the proposed incentive mechanisms are direct, i.e., drivers must directly report their types/preferences to the TSO, which may not be very practical because drivers themselves may not know their types. 
In such a case, one could consider indirect or learning-based mechanisms, where the TSO infers the types of drivers by observing their driving behaviors. 
However, from the analysis point of view, the implementation of both mechanisms is equivalent thanks to the revelation principle.
Nonetheless, indirect mechanisms are more practical, which will be explored in our future work.

\appendix

\subsection{Proof of \cref{prop:IC-char}}
\label{appendix:prop:IC}

By adding and subtracting $c_i(f(\hat\theta_i,\theta_{-i}),\hat\theta_i)$ on the right side of \eqref{eq:IC-def}, we can write the condition of truthfulness as
\begin{multline}
    \label{eq:IC-def2}
    \ell_i(f(\theta_i,\theta_{-i}),\theta_i,u_i) \leq \ell_i(f(\hat\theta_i,\theta_{-i}),\hat\theta_i,u_i) \\
    + (\theta_i - \hat\theta_i) (x_i(f(\hat\theta_i,\theta_{-i})) - y_i(f(\hat\theta_i,\theta_{-i}))).
\end{multline}
If $\ell_i(f(\theta_i,\theta_{-i}),\theta_i,u_i)$ is concave in $\theta_i$ and if \eqref{eq:ic-char} holds, then \eqref{eq:IC-def2} is satisfied because of the concavity characterization of differentiable functions.

Now, suppose $\ell_i(f(\theta_i, \theta_{-i}), \theta_i,u_i)$ is not concave in $\theta_i$ but suppose \eqref{eq:ic-char} holds, then there exists $\theta_i\in\Theta_i$ such that the truthfulness condition \eqref{eq:IC-def2} is violated because of non-concavity of $\ell_i$. Finally, suppose $\ell_i(f(\theta_i,\theta_{-i}), \theta_i,u_i)$ is concave in $\theta_i$ but suppose \eqref{eq:ic-char} does not hold. Then, for some $\theta_i\in\Theta_i$, either of the following holds:
\begin{subequations}
    \begin{align}
        & \D_{\theta_i} \ell_i(f(\theta),\theta_i,u_i) < x_i(f(\theta)) - y_i(f(\theta))
        \tag{C1} \label{eq:cond1} \\ 
        & \D_{\theta_i} \ell_i(f(\theta),\theta_i,u_i) > x_i(f(\theta)) - y_i(f(\theta)).
        \tag{C2} \label{eq:cond2}
    \end{align}
\end{subequations}
For the sake of showing a contradiction, assume that the truthfulness condition \eqref{eq:IC-def2} holds. Then, add and subtract $(\theta_i-\hat\theta_i) \D_{\theta_i} \ell_i(f(\theta),\theta_i,u_i)$ in \eqref{eq:IC-def2} and rearrange. By the concavity of $\ell_i$ with respect to $\theta_i$, we have the right-hand side of \eqref{eq:IC-def2}
\begin{multline*}
    \ell_i(f(\hat\theta_i,\theta_{-i}),\hat\theta_i,u_i) - \ell_i(f(\theta_i,\theta_{-i}),\theta_i,u_i)\\ 
    + (\theta_i-\hat\theta_i) \D_{\theta_i} \ell_i(f(\theta),\theta_i,u_i) \leq 0
\end{multline*}
but for either ``condition \eqref{eq:cond1} and $\hat\theta_i > \theta_i$'' or ``condition \eqref{eq:cond2} and $\hat\theta_i < \theta_i$'', 
we have the left-hand side of \eqref{eq:IC-def2}
\begin{multline*}
    (\theta_i - \hat\theta_i) (\D_{\theta_i} \ell_i(f(\theta),\theta_i,u_i) 
    - x_i(f(\hat\theta_i,\theta_{-i})) \\ + y_i(f(\hat\theta_i,\theta_{-i}))) > 0.
\end{multline*}
This is a contradiction because a positive number can never be less than equal to a non-positive number. 
\qed

\end{document}